\newtheorem{hypothesis}{Hypothesis}
\newenvironment{proofof}{\noindent \textit{Proof of claim. }}{\hfill$\Diamond$}
\def\is{\textsc{Independent Set}}
\def\m3sat{\textsc{Max-3Sat}}
\def\3sat{\textsc{3Sat}}
\def\ml{\textsc{Max-3Lin}}
\def\vc{\textsc{Vertex Cover}}
\def\ds{\textsc{Dominating Set}}
\def\mids{\textsc{Independent Dominating Set}}
\def\mcolo{\textsc{Coloring}}
\def\cb{\textsc{Bipartite Subgraph}}
\def\msp{\textsc{Set Packing}}
\newcommand{\minsat}{\textsc{Min-Sat}}
\def\pnp{$\mathbf{P}=\mathbf{NP}$}
\def\np{$\mathbf{NP}$}
\let\leq\leqslant
\let\geq\geqslant
\begin{document}

\title{\textbf{Subexponential and FPT-time Inapproximability of Independent Set and Related Problems}
\thanks{Research supported by the French Agency for Research under the DEFIS program TODO, ANR-09-EMER-010. A preliminary version of this work appeared in~\cite{escoffier-kim-paschos}.}}

\author{Bruno Escoffier\and Eun Jung Kim \and Vangelis~Th.~Paschos\thanks{Also, Institut Universitaire de France}}
\institute{PSL Research University, Universit\'{e} Paris-Dauphine, LAMSADE, CNRS UMR 7243\\
\texttt{\{escoffier,eun-jung.kim,paschos\}@lamsade.dauphine.fr}}


\maketitle

\begin{abstract}
Fixed-parameter algorithms, approximation algorithms and moderately
exponential algorithms are three major approaches to algorithms
design. While each of them being very active in its own, there is an
increasing attention to the connection between different approaches.
In particular, whether \is{} would be better approximable once
endowed with subexponential-time or FPT-time is a central question.
In this paper, we present a strong link between the linear PCP
conjecture and the inapproximability, thus partially answering this
question.
\end{abstract}

\section{Introduction}\label{intro}

In this paper we look into three approaches to algorithms design: Fixed-pa\-ra\-me\-ter algorithms,
approximation algorithms and moderately exponential algorithms. These three areas, each of them being very active in its
own, have been considered as foreign to each other until recently. Polynomial-time approximation algorithm produces a solution whose quality is guaranteed to lie within a certain range from the optimum. One illustrative problem indicating the development of this area is \is{}.
The approximability of \is{} within
constant ratios\footnote{The approximation ratio of an algorithm
computing a feasible solution for some problem is the ratio of the
value of the solution computed over the optimal value for the
problem.} has remained as the most important open problems for a long time in the field. It was only after the novel
characterization of the \np{} given by the \textsc{PCP}
theorem~\cite{motwa,motwaj} that impossibility of such
approximability has been proven assuming \pnp{}. Subsequent
improvements of the original PCP theorem, leading to
corresponding refinements of the characterization of~\np{} have
also led to the actual very strong inapproximability result for
\is{}, namely, that it is inapproximable within
ratios~$\Omega(n^{\varepsilon - 1})$ for any $\varepsilon > 0$,
unless \pnp{}~\cite{zuckermanis06}.

Moderately exponential algorithm is to allow exponential running
time for the sake of optimality. In this case, the endeavor lies in
limiting the growth of running time function as slow as possible.
Parameterized complexity provides an alternative framework to
analyze the running time in a more refined way
\cite{downeybook,FG06}. The aim is to get an $O(f(k)\cdot
n^{c})$-time algorithm for some constant $c$ (independent of $k$).
As these two research programs offer a generous running time
compared to polynomial-time approximation algorithms, a growing
amount of attention is paid to them as a way to cope with hardness
in approximability. The first one deals with \textit{moderately
exponential approximation}. The goal of this program is to explore
approximability of highly inapproximable (in polynomial time)
problems in superpolynomial or moderately exponential time. Roughly
speaking, if a given problem is solvable in time say~$O^*(\gamma^n)$
but it is NP-hard to approximate within some ratio~$r$, we seek
$r$-approximation algorithms with complexity~- significantly~- lower
than $O^*(\gamma^n)$. This issue has been considered for several
problems such as \textsc{Set Cover}~\cite{cygan,effapproxsc},
\mcolo{}~\cite{incexcsetpart,effapproxcolor}, \is{} and
\vc{}~\cite{effapprox},
\textsc{Bandwidth}~\cite{CyganP10,FurerGK09}.

The second research
program handles approximation by fixed parameter
algorithms. In this approximation framework, we say that a
parameterized (with parameter~$k$) problem~$\Pi$ is $r$-approximable
if there exists an algorithm taking as inputs an instance~$I$
of~$\Pi$ and~$k$ and either computes a solution smaller or greater
than (depending on whether~$\Pi$ is, respectively, a minimization,
or a maximization problem)~$rk$, or returns ``no'', asserting in
this case that there is no solution of value at most or at least
$k$. This line of research was initiated by three independent works~\cite{dofemcciwpec,caihuiwpec,ChenGG06}. As an excellent overview in this direction, see~\cite{marx-approx}.

Several natural questions can be asked dealing with these two
programs. In particular, the following ones have been asked several
times (see for
instance~\cite{marx-approx,dofemcciwpec,FurerGK09,effapprox}) and of great interest:
\begin{enumerate}
\item[{\bf Q1}]\label{ena} can a highly inapproximable in polynomial time problem be well-ap\-p\-ro\-x\-i\-ma\-ted in subexponential time?
\item[{\bf Q2}]\label{dyo} does a highly inapproximable in polynomial time problem become well-approximable in parameterized time?
\end{enumerate}
Few answers have been obtained until now. Regarding {\bf Q1},
negative results can be directly obtained by gap-reductions for
certain problems. For instance, \mcolo{} is not approximable within
ratio $4/3-\epsilon$, since this would allow to determine whether a
graph is 3-colorable or not in subexponential time. This contradicts
a widely-acknowledge computational assumption~\cite{impa}:
\begin{quote}
\textit{Exponential Time Hypothesis} (\textsc{ETH}): There exists an $\epsilon > 0$ such that no algorithm
solves \3sat{} in time $2^{\epsilon n}$, where $n$ is the number
of variables.
\end{quote}
Regarding {\bf Q2},~\cite{dofemcciwpec} shows that assuming FPT $\neq$ W[2], for any~$r$ the \mids{} problem is not $r$-approximable\footnote{Actually, the result is even stronger: it is impossible to obtain a ratio $r=g(k)$ for any function $g$.} (in FPT time).

Among interesting problems for which {\bf
Q1} and {\bf Q2} are worth being asked are \is{}, \mcolo{} and \ds{}. They fit in the frame of both {\bf Q1} and {\bf Q2} above: they are hard to approximate in polynomial time
while their approximability in subexponential or in parameterized
time is still open. Note that \is{} and \ds{} are moderately
exponential approximable within any ratio $1 - \varepsilon$, for any
$\varepsilon > 0$~\cite{effapproxsc,effapprox}, while \mcolo{} is
approximable within ratio $(1 + 1/\chi(G))$, where~$\chi(G))$
denotes the chromatic number of a graph~$G$ in moderately
exponential time~\cite{incexcsetpart,effapproxcolor}.

Our contribution in this paper is to establish a link between a
major conjecture in PCP theorem and inapproximability in
subexponential-time and in FPT-time, assuming ETH. We first state
the conjecture while the definition of PCP is deferred to the next
section.
{\begin{quote}
\textit{Linear PCP Conjecture} (\textsc{LPC}): \3sat{} $\in
{\textsc{PCP}}_{1,1/2}[\log{|\phi|} + D, E]$, where $|\phi|$ is the
size of the \3sat{} instance (sum of lengths of clauses), $D$ and
$E$ are constant.
\end{quote}}
Unlike \textsc{ETH} which is arguably recognized as a valid
statement, \textsc{LPC} is a wide open question. In the emphasized
statement given just below, we claim that if \textsc{LPC} turns out
to hold, it immediately implies that one of the most interesting
questions in subexponential and parameterized approximation is
negatively answered. In particular, as shown in the sequel, assuming \textsc{ETH} the followings hold for \is{} on $n$ vertices, for any constant $0<r<1$: \\
(i)~There is no $r$-approximation algorithm in time $O(2^{n^{1-\delta}})$ for any $\delta>0$. \\
(ii)~There is no $r$-approximation algorithm in time $O(2^{o(n)})$ if \textsc{LPC} holds. \\
(iii)~There is no $r$-approximation algorithm in time $O(f(k)n^{O(1)})$ if \textsc{LPC} holds.


Remark that (i) is not conditional upon \textsc{LPC}. In fact, this
is an immediate consequence of near-linear PCP construction
{achieved in~\cite{Dinur07}. Note that similar inapproximability
results under \textsc{ETH} for \m3sat{} and \ml{} for some
subexponential running time
 have been obtained in
\cite{DBLP:conf/focs/MoshkovitzR08}.}

In the following, Section~\ref{sec:pre} reviews some known
consequences of near-linear PCP. In Section~\ref{secparam}, we show
how a combination of two classic reductions yields parameterized
inapproximabiliy bounds for \is{} provided that {\textsc LPC} and
{\bf ETH} hold (point (iii) above); we also provide a parameterized
approximation preserving reduction that allows to transfer
parameterized inapproximability results to \ds{}. In
Section~\ref{secsubexpo}, we analyze known reductions in the view of
inapproximability in subexponential running time and present some
results similar to (i) and (ii).



\section{Preliminaries}\label{sec:pre}

\subsection{PCP and inapproximability of \m3sat{}}

A problem is in ${\textsc{PCP}}_{\alpha,\beta}[q,p]$ if there exists
a PCP verifier which uses $q$ random bits, reads at most $p$ bits in
the proof and is such that:
\begin{itemize}
\item if the instance is positive, then there exists a proof such that V(erifier) accepts with probability at least $\alpha$;
\item if the instance is negative, then for any proof~V accepts with probability at most~$\beta$.
\end{itemize}
{Based upon the above definition, the following
theorem is proved in~\cite{Dinur07} (see also Theorem 7 in
\cite{DBLP:conf/focs/MoshkovitzR08}), presenting a further
refinement of the characterization of NP.
\begin{theorem}\cite{Dinur07}\label{2-query}
For every $\epsilon>0$,
$$
\3sat{} \in {\textsc{PCP}}_{1,\epsilon}[(1+o(1))\log n
+O(\log(1/\epsilon)),O(\log(1/\epsilon))]
$$
\end{theorem}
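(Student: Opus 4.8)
The plan is to build the claimed verifier in two stages. First invoke the \emph{near-linear-size} PCP of~\cite{Dinur07} to obtain perfect completeness, \emph{constant} soundness, $O(1)$ queries and only $(1+o(1))\log n$ random bits; then amplify the soundness down to $\epsilon$ by derandomized sequential repetition along a random walk on an expander, which costs only $O(\log(1/\epsilon))$ extra random bits and $O(\log(1/\epsilon))$ extra queries.

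\emph{Step 1 (base PCP).} Dinur's gap-amplification construction produces for \3sat{} a probabilistically checkable proof of length $n\cdot(\log n)^{O(1)}$ with perfect completeness, soundness at most some universal constant $s<1$, and query complexity $O(1)$; equivalently $\3sat{}\in\textsc{PCP}_{1,s}[(1+o(1))\log n,\,O(1)]$, since the number of admissible random strings is $n\cdot(\log n)^{O(1)}$ and hence the randomness is $\log n+O(\log\log n)=(1+o(1))\log n$. I would take this as a black box: reproving it (graph powering of expanderized constraint graphs composed with an assignment tester) is the hard, self-contained ingredient and is orthogonal to the refinement we are after.

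\emph{Step 2 (soundness amplification).} Let $r=(1+o(1))\log n$ be the randomness of the base verifier $V_0$, and identify its random strings with the vertices of an explicit $d$-regular expander $G$ on $N=2^{r}$ vertices whose normalized second eigenvalue satisfies $\lambda\le(1-s)/4$; such families exist with $d=O(1)$ (padding the vertex set or the degree as needed). Put $t=\Theta(\log(1/\epsilon))$. The new verifier picks a uniform vertex $\rho_0$ (using $r$ bits), takes a $t$-step random walk $\rho_0,\rho_1,\dots,\rho_t$ on $G$ (using $O(1)$ fresh bits per step, hence $O(t)$ bits total), simulates $V_0$ with random string $\rho_i$ for every $i$, and accepts iff all $t+1$ runs accept. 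Completeness is immediate: a proof making $V_0$ accept on \emph{every} random string makes the new verifier accept with probability $1$. For soundness, fix any proof $\pi$ and let $B\subseteq\{0,1\}^r$ be the set of random strings on which $V_0$ accepts given $\pi$, so $|B|/N\le s$ by soundness of $V_0$; by the Ajtai--Koml\'os--Szemer\'edi expander-walk bound the probability that the whole walk stays inside $B$ is at most $(s+\lambda)^{t}\le\big((1+s)/2\big)^{t}\le\epsilon$ for a suitable $t=O(\log(1/\epsilon))$. The new verifier therefore uses $r+O(t)=(1+o(1))\log n+O(\log(1/\epsilon))$ random bits and makes $(t+1)\cdot O(1)=O(\log(1/\epsilon))$ queries, which is exactly the statement.

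The only genuine obstacle is Step~1: everything that makes the PCP simultaneously near-linear in size \emph{and} of bounded query complexity is packed into Dinur's theorem, and without a near-linear base construction the expander walk would still cost $(1+o(1))\log n$ bits but one would start from a PCP of superpolynomial (or merely quasi-linear-with-a-worse-exponent) size and lose the $(1+o(1))$ factor. Given the base PCP, Step~2 is routine; the only care needed is to have an explicit expander on (a slight padding of) exactly $2^{r}$ vertices with the required spectral gap, and to cite the expander-walk concentration inequality in the form $(s+\lambda)^t$ used above.
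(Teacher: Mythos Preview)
The paper does not supply its own proof of this theorem: it is quoted verbatim as a result of~\cite{Dinur07} (with an additional pointer to~\cite{DBLP:conf/focs/MoshkovitzR08}) and then used as a black box in the subsequent reductions. There is therefore no in-paper argument to compare your attempt against.

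That said, your two-stage outline is precisely the standard derivation of the stated parameters from Dinur's base construction, and it is correct. Step~1 (the near-linear-size PCP with constant soundness, $O(1)$ queries, and $(1+o(1))\log n$ random bits) is indeed the deep ingredient and is rightly taken as a black box from~\cite{Dinur07}. Step~2 (randomness-efficient soundness amplification by running the base verifier along a $t$-step expander walk, with $t=\Theta(\log(1/\epsilon))$) is the classical way to reach soundness~$\epsilon$ while adding only $O(\log(1/\epsilon))$ random bits and $O(\log(1/\epsilon))$ queries; your use of the expander-walk hitting bound $(s+\lambda)^t$ is the right tool. One cosmetic remark: with your choice $\lambda\le(1-s)/4$ you actually get $s+\lambda\le(3s+1)/4$, which is smaller than the $(1+s)/2$ you wrote, so the claimed inequality still holds and the conclusion is unaffected. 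Your assessment that everything nontrivial is concentrated in Step~1 matches exactly how the present paper treats the theorem.
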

A recent improvement \cite{DBLP:conf/focs/MoshkovitzR08} of
Theorem~\ref{2-query} (a PCP Theorem with two-query projection
tests, sub-constant error and almost-linear size) has some important
corollaries in polynomial approximation. Among those, the following
two are of particular interest in what follows.}
\begin{corollary}\cite{DBLP:conf/focs/MoshkovitzR08}\label{2-query-cor1}
Under \textsc{ETH}, for every $\epsilon>0$, and $\delta>0$, it is
impossible to distinguish between instances of \textsc{max 3-lin}
with $m$ equations where at least $(1-\epsilon)m$ are satisfiable
from instances where at most $(1/2+\epsilon)m$ are satisfiable, in
time $O(2^{m^{1-\delta}})$.
\end{corollary}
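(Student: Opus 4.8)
The plan is to obtain the asserted gap version of \ml{} by plugging the almost-linear-size two-query projection PCP of~\cite{DBLP:conf/focs/MoshkovitzR08} (the refinement of Theorem~\ref{2-query} discussed above) into H{\aa}stad's Fourier-analytic long-code reduction from label-cover games to systems of linear equations over $\mathrm{GF}(2)$, and then to track the sizes of the instances carefully enough that the subexponential lower bound is preserved through the composition.

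First I would fix the starting point: by the sparsification lemma, \textsc{ETH} implies that \3sat{} cannot be decided in time $2^{o(n)}$ even on instances with $n$ variables and $O(n)$ clauses, so one may assume the input formula $\phi$ has size $\Theta(n)$. To such a $\phi$ I then apply the PCP of~\cite{DBLP:conf/focs/MoshkovitzR08} in its combinatorial form: for every constant $\epsilon>0$ there is a polynomial-time reduction producing a two-prover one-round game (equivalently, a label-cover / projection instance) $G$ with perfect completeness, soundness at most $\epsilon$, alphabet size bounded by a function of $\epsilon$ only, and total size $N=n^{1+o(1)}$.

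Next I would run H{\aa}stad's gadget on $G$: it outputs, in time polynomial in its own size, a system of $m$ linear equations over $\mathrm{GF}(2)$, each on three variables, such that (a) if $G$ is fully satisfiable then some assignment satisfies at least $(1-\epsilon)m$ of them, while (b) if the value of $G$ is at most $\epsilon$ then no assignment satisfies more than $(1/2+\epsilon)m$ of them. Since each long code over the constant-size alphabet of $G$ has a number $C(\epsilon)$ of coordinates that does not depend on $n$, and the constraint graph of $G$ is sparse, this step increases the size by only the multiplicative factor $C(\epsilon)$, so $m=C(\epsilon)\cdot N=n^{1+o(1)}$ for fixed $\epsilon$. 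Composing the two reductions, a \3sat{} instance of size $n$ is mapped to a \ml{} instance with $m=n^{1+o(1)}$ equations falling in case (a) or case (b) according to the satisfiability of $\phi$. Hence an algorithm distinguishing (a) from (b) in time $O(2^{m^{1-\delta}})$ would decide \3sat{}, up to the polynomial overhead of the two reductions, in time $O(2^{m^{1-\delta}})=O(2^{n^{(1+o(1))(1-\delta)}})$; as $\delta>0$ is fixed and the $o(1)$ in the exponent tends to $0$, this is $O(2^{n^{1-\delta/2}})=2^{o(n)}$ for $n$ large, contradicting \textsc{ETH}. Carrying out this reasoning for each $\epsilon>0$ and $\delta>0$ yields the statement.

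The step I expect to be the genuine obstacle is not the gap, which is supplied by the cited PCP together with the classical long-code analysis, but the size accounting: one must check that the $n^{1+o(1)}$ bound of~\cite{DBLP:conf/focs/MoshkovitzR08} is truly sub-polynomial in the exponent and that it remains so after the $\epsilon$-dependent multiplicative blow-up of the gadget, so that $m^{1-\delta}=o(n)$ holds for the fixed $\delta$; in addition, keeping the completeness of the produced \ml{} instance at $1-\epsilon$ while its soundness stays at $1/2+\epsilon$ relies on the folding and noise features of H{\aa}stad's test being set up correctly.
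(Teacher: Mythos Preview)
The paper does not give its own proof of this corollary; it is quoted verbatim as a result of Moshkovitz and Raz~\cite{DBLP:conf/focs/MoshkovitzR08}, introduced only by the sentence that the almost-linear two-query projection PCP of that paper ``has some important corollaries in polynomial approximation,'' of which this is one. So there is nothing in the present paper to compare your argument against.

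That said, your sketch is precisely the standard derivation one finds (and the one underlying the cited reference): sparsify to linear-size \3sat{}, apply the almost-linear projection PCP to obtain a label-cover instance of size $n^{1+o(1)}$ with constant alphabet and arbitrarily small soundness, then run H{\aa}stad's long-code test to get the $(1-\epsilon)$ versus $(1/2+\epsilon)$ gap for \ml{}. Your size accounting is the right point of care and is handled correctly: for fixed $\epsilon$ the alphabet, and hence the long-code blow-up, is a constant, so $m=n^{1+o(1)}$ and $2^{m^{1-\delta}}=2^{o(n)}$ for any fixed $\delta>0$. The only cosmetic remark is that invoking the sparsification lemma produces $2^{\epsilon' n}$ subinstances rather than a single linear-size instance, so strictly speaking one decides each $\phi_i$ in time $2^{o(n)}$ and sums, exactly as the paper does in its proofs of Lemma~\ref{lpc2hyp} and Theorem~\ref{subexpis}; this does not affect the conclusion.
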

\begin{corollary}\cite{DBLP:conf/focs/MoshkovitzR08}\label{2-query-cor2}
Under \textsc{ETH}, for every $\epsilon>0$, and $\delta>0$, it is
impossible to distinguish between instances of \m3sat{}
with $m$ clauses where at least $(1-\epsilon)m$ are satisfiable
from instances where at most $(7/8+\epsilon)m$ are satisfiable, in
time $O(2^{m^{1-\delta}})$.
\end{corollary}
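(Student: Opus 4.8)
The plan is to obtain Corollary~\ref{2-query-cor2} from Corollary~\ref{2-query-cor1} by the standard gadget reduction from \ml{} to \m3sat{}, followed by a bookkeeping argument on the running time. The algebraic fact I would use is that a linear equation $x \oplus y \oplus z = 0$ over $\mathbb{F}_2$ is logically equivalent to the conjunction of the four $3$-clauses $(\bar x \vee \bar y \vee \bar z)$, $(\bar x \vee y \vee z)$, $(x \vee \bar y \vee z)$, $(x \vee y \vee \bar z)$, while $x \oplus y \oplus z = 1$ is equivalent to the conjunction of the four complementary clauses; moreover, an assignment that \emph{falsifies} such an equation falsifies \emph{exactly one} of its four clauses. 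So given a \ml{} instance with $m$ equations, each involving three (which we may assume distinct) variables, I would replace each equation by its four clauses, obtaining in linear time a \m3sat{} instance with $m' = 4m$ clauses on the same variables.

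Next I would propagate the gap. By the fact above, an assignment satisfying exactly $k$ of the $m$ equations satisfies exactly $4k + 3(m-k) = 3m + k$ of the $4m$ clauses; hence the maximum number of simultaneously satisfiable clauses equals $3m$ plus the maximum number of simultaneously satisfiable equations. Therefore, if at least $(1-\epsilon)m$ equations are satisfiable then at least $3m + (1-\epsilon)m = (1-\epsilon/4)m'$ clauses are satisfiable, while if at most $(1/2+\epsilon)m$ equations are satisfiable then at most $3m + (1/2+\epsilon)m = (7/8+\epsilon/4)m'$ clauses are satisfiable. Up to replacing $\epsilon$ by $4\epsilon$ (and noting that it suffices to prove the statement for arbitrarily small $\epsilon>0$, the larger values being weaker), this is exactly the promised gap between $(1-\epsilon)m'$ and $(7/8+\epsilon)m'$.

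Finally I would transfer the time bound. Suppose, towards a contradiction, that for some $\epsilon>0$, $\delta>0$ there were an algorithm distinguishing the two classes of \m3sat{} instances in time $O(2^{(m')^{1-\delta}})$. Feeding it the output of the reduction gives an algorithm distinguishing the corresponding two classes of \ml{} instances, and its running time is $O(2^{(4m)^{1-\delta}}) = O(2^{4^{1-\delta}m^{1-\delta}})$; since $4^{1-\delta}$ is an absolute constant, for $m$ large enough this is at most $O(2^{m^{1-\delta/2}})$, contradicting Corollary~\ref{2-query-cor1} (which already incorporates \textsc{ETH}). This would complete the argument. One could alternatively bypass Corollary~\ref{2-query-cor1} and argue directly from the $3$-query, perfect-completeness, soundness-$(1/2+\epsilon)$ PCP underlying Theorem~\ref{2-query}, but the route through \ml{} is the shortest.

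I do not expect a real obstacle here: the reduction is linear, perfectly complete at the gadget level, and the soundness analysis is the elementary ``exactly one falsified clause per violated equation'' count. The only two points requiring care are the gap arithmetic — verifying that this count yields precisely the $7/8$ threshold and that both $\epsilon$-slacks are absorbed by the rescaling — and the remark that inflating the instance size by the constant factor $4$ costs only an arbitrarily small loss in the exponent, so that the subexponential-time regime of Corollary~\ref{2-query-cor1} is preserved.
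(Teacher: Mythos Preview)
Your argument is correct: the four-clause gadget, the exact ``one falsified clause per violated equation'' count giving $3m+k$ satisfied clauses, the resulting $(1-\epsilon/4)$ versus $(7/8+\epsilon/4)$ gap on $m'=4m$ clauses, and the absorption of the constant factor $4$ into the exponent are all fine.

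As for comparison with the paper: there is nothing to compare. The paper does not prove Corollary~\ref{2-query-cor2}; it simply quotes it (together with Corollary~\ref{2-query-cor1}) as a consequence of the almost-linear-size PCP of Moshkovitz and Raz, with the citation \cite{DBLP:conf/focs/MoshkovitzR08} attached to the statement itself. Your derivation of Corollary~\ref{2-query-cor2} from Corollary~\ref{2-query-cor1} via the H{\aa}stad-style equation-to-clauses gadget is the standard route and is a perfectly acceptable way to supply the missing justification.
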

The following is a stronger version of Corollary \ref{2-query-cor2}: it holds if {\textsc LPC} holds. This will be our working
hypothesis.
\begin{hypothesis}
\label{hyp0} Under \textsc{ETH}, there exists $r<1$ such that: for every
$\epsilon > 0$ it is impossible to distinguish between instances of
\m3sat{} with $m$ clauses where at least $(1-\epsilon)m$ are
satisfiable from instances where at most $(r + \epsilon)m$ are
satisfiable, in time $2^{o(m)}$.
\end{hypothesis}

Using the well known sparsification lemma
(Lemma~\ref{sparsificationlemma}), which intuitively allows to work
with 3-SAT formula with linear lengths (the sum of the lengths of
clauses is linearly bounded in the number of variables), a very
standard argument gives the validity of Hypothesis~\ref{hyp0} under
{\textsc LPC}, see Lemma~\ref{lpc2hyp}.
\begin{lemma}\cite{impa}\label{sparsificationlemma}
For all $\epsilon>0$, a 3-SAT formula $\phi$ on $n$ variables can be
written as the disjunction of at most $2^{\epsilon n}$ 3-SAT formula
$\phi_i$ on (at most) $n$ variables such that $\phi_i$ contains each
variable in at most $c_\epsilon$ clauses for some function
$c_\epsilon$. Moreover, this reduction takes at most $p(n)
2^{\epsilon n}$ time.
\end{lemma}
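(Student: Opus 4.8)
This is the sparsification lemma of Impagliazzo, Paturi and Zane; its proof is a branching argument, which I would set up as follows. Identify each clause with its set of literals; for a set $H$ of literals let $d_\phi(H)$ be the number of clauses of $\phi$ containing $H$, and call $H$ \emph{heavy} if $d_\phi(H)\ge t_{|H|}$, where $t_1>t_2$ are integer thresholds to be fixed, at the very end, as functions of $\epsilon$ alone. Two ingredients are used. (i) Counting: in a $3$-CNF in which no single literal is heavy the number of clauses is at most $\sum_C|C|<2nt_1$, so a formula with \emph{no} heavy set has every variable in at most $c_\epsilon:=2t_1$ clauses — this is the ``sparse'' target for the $\phi_i$. (ii) The sunflower lemma: a family of more than $a!\,(p-1)^a$ sets of size at most $a$ contains a $p$-petal sunflower, i.e.\ sets $C_1,\dots,C_p$ whose pairwise intersections all equal a common core and whose petals $C_j\setminus\mathrm{core}$ are nonempty and pairwise disjoint.

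The procedure. If the current formula $\psi$ has no heavy set, output it. Otherwise take a heavy $H$ with $|H|$ maximal. Any assignment either satisfies the clause $\bigvee_{\ell\in H}\ell$ or falsifies every literal of $H$, so $\psi\equiv\psi_A\vee\psi_B$, where $\psi_A$ is $\psi$ with the clause $H$ added and every clause strictly containing $H$ deleted (those clauses are subsumed), and $\psi_B$ is $\psi$ restricted by the partial assignment falsifying $H$. Recurse on both. The $\phi_i$ are the leaves; $\phi=\bigvee_i\phi_i$ follows by induction from the equivalence used at every node, each leaf is sparse, and every node does $\mathrm{poly}(n)$ work, so the running time is $\mathrm{poly}(n)$ times the number of leaves. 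Note that $\psi_A$ keeps the variable set but destroys at least $t_{|H|}-1$ clauses, whereas $\psi_B$ fixes the $|H|\ge 1$ variables of $H$ and additionally deletes or shortens the clauses meeting $H$.

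Bounding the number of leaves by $2^{\epsilon n}$ is the one genuinely delicate point, and the step I expect to be the main obstacle. The naive estimate $\ell(\psi)\le\ell(\psi_A)+\ell(\psi_B)$ does not help: the $A$-branch leaves the variable count untouched and a root--leaf path can carry polynomially many $A$-steps, so a bound of the form $2^{\#\mathrm{steps}}$ is hopeless. The way out is a global amortisation that exploits the sunflower structure. Since $|H|$ is maximal and $t_{|H|}\gg t_{|H|+1}$, the $\ge t_{|H|}$ clauses through $H$ contain a sunflower with core exactly $H$ and $p$ disjoint petals for large $p$; for a $3$-CNF a heavy pair ($|H|=2$) automatically has all petals of size $1$, so its $B$-branch forces $\Theta(p)$ further variables true, while for a heavy literal ($|H|=1$) one first extracts such a sunflower from its clause-set, taking $t_1>2(t_2-1)^2$ so that the core is empty. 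One then tracks a suitable potential $\mu(\psi)\le n$ and shows that, with the thresholds chosen large enough in terms of $\epsilon$, every branching decreases $\mu$ by enough in \emph{both} children that the associated branching number is at most $2^{\epsilon}$, whence $\ell(\psi)\le 2^{\epsilon\mu(\psi)}\le 2^{\epsilon n}$ and $c_\epsilon=2t_1$ depends only on $\epsilon$. Pinning $\mu$ down and verifying the two-sided decrease — in particular controlling the $B$-branch when petals have size $\ge 2$, which can happen only at $|H|=1$ — is exactly the technical core of~\cite{impa}; since the lemma is classical, one may of course just cite it.
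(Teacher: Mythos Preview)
The paper does not prove this lemma; it is quoted from~\cite{impa} and used as a black box (in the proofs of Lemma~\ref{lpc2hyp} and Theorem~\ref{subexpis}). There is thus no ``paper's own proof'' to compare your attempt against: the intended content at this point is a bare citation, and your closing remark (``since the lemma is classical, one may of course just cite it'') is exactly what the paper does.

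That said, your sketch is a faithful outline of the Impagliazzo--Paturi--Zane argument: branch on a maximal heavy literal-set~$H$, subsume in one child and restrict in the other, and exploit the sunflower structure among the clauses through~$H$ to amortise so that the recursion tree has at most $2^{\epsilon n}$ leaves. You correctly identify that the amortisation via a potential~$\mu$ is the only genuinely delicate step, and that the naive one-bit-per-branch count fails because the $A$-child does not shrink the variable set. If you ever need to flesh this out, the place requiring care is precisely the one you flag: defining~$\mu$ and verifying the two-sided drop with thresholds $t_1\gg t_2$ chosen in terms of~$\epsilon$. For the purposes of the present paper, however, none of that is required.
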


\begin{lemma}\label{lpc2hyp}
If {\textsc LPC}\footnote{Note that {\textsc LPC} as expressed in
this article implies that Hypothesis~\ref{hyp0} holds event with
replacing $(1-\epsilon)m$ by $m$. However, we define
Hypothesis~\ref{hyp0} with this lighter statement $(1-\epsilon)m$ in
order, in particular, to emphasize the fact that perfect
completeness is not required in the LPC conjecture.} holds, then
Hypothesis~\ref{hyp0} also holds.
\end{lemma}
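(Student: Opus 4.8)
The plan is to push a \3sat{} instance through the textbook PCP-to-gap-Max-SAT reduction, taking care that every quantity stays linear in the number of variables, so that a $2^{o(m)}$-time distinguisher for the resulting gap-\m3sat{} problem would yield a $2^{o(n)}$-time algorithm for \3sat{}, contradicting \textsc{ETH}. The role of \textsc{LPC} is precisely to guarantee that the PCP uses only $\log|\phi|+O(1)$ random bits, hence a number of ``tests'' linear in $|\phi|$; the role of the sparsification lemma is to guarantee that $|\phi|$ itself is linear in $n$, which would fail for a dense formula and would leave us only with a useless $2^{o(n^3)}$-type bound.

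In detail: given a \3sat{} instance $\phi$ on $n$ variables, fix a small constant $\mu>0$ and apply Lemma~\ref{sparsificationlemma} to write $\phi$ as a disjunction $\phi_1\vee\dots\vee\phi_t$ with $t\le 2^{\mu n}$, each $\phi_i$ on at most $n$ variables and with every variable occurring in at most $c_\mu$ clauses, so $|\phi_i|\le c_\mu n$. It then suffices to turn each $\phi_i$ separately into a \m3sat{} instance $\psi_i$ with $m=O(n)$ clauses such that $\phi_i$ satisfiable implies all $m$ clauses of $\psi_i$ are satisfiable, while $\phi_i$ unsatisfiable implies at most $rm$ of them are, for a constant $r<1$ depending only on \textsc{LPC}. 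To build $\psi_i$, apply \textsc{LPC}: there is a polynomial-time verifier using $R=\log|\phi_i|+D$ random bits and reading $E$ proof bits, with perfect completeness and soundness $1/2$; since $2^R=2^D|\phi_i|=O(n)$, we may assume the proof has length $O(n)$ and (by reading whole query-decision trees, at the cost of replacing $E$ by $2^E$) that the verifier is non-adaptive, so each random string $\rho$ yields a predicate $f_\rho$ on the $E$ positions it queries. Introduce one Boolean variable per proof position; for each $\rho$ write $f_\rho$ as a conjunction of at most $2^E$ clauses of width $\le E$ (one per locally rejecting assignment), and split each such clause into at most $E$ clauses of width $3$ with fresh auxiliary variables in the standard way. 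This gives $m\le 2^E\cdot E\cdot 2^R=O(n)$ clauses. Completeness is clear: an honest proof satisfies every $f_\rho$, and the auxiliary variables can then be set to satisfy all of $\psi_i$. For soundness, if $\phi_i$ is unsatisfiable then any assignment, viewed as a proof, is rejected on at least $2^R/2$ strings, and for each such $\rho$ at least one width-$3$ clause of its gadget is falsified regardless of the auxiliary bits; hence at least $2^R/2$ clauses of $\psi_i$ are unsatisfied, i.e. at most $\bigl(1-\tfrac{1}{2^{E+1}E}\bigr)m$ are satisfied. Set $r:=1-\tfrac{1}{2^{E+1}E}<1$.

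Finally, fix any $\epsilon>0$ and suppose some algorithm $A$ distinguishes ``$\ge(1-\epsilon)m$ satisfiable'' from ``$\le(r+\epsilon)m$ satisfiable'' for \m3sat{} in time $2^{o(m)}$. Running $A$ on $\psi_1,\dots,\psi_t$ and declaring $\phi$ satisfiable iff $A$ accepts some $\psi_i$ decides \3sat{}; since each $\psi_i$ has $m=O(n)$ clauses and is computable in $\mathrm{poly}(n)$ time, and $t\le 2^{\mu n}$, the total running time is $\mathrm{poly}(n)\,2^{\mu n}+2^{\mu n}\,2^{o(n)}\le 2^{O(\mu)n}$ for $n$ large. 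As $\mu>0$ is an arbitrarily small constant while $r$ is fixed, this contradicts \textsc{ETH}, proving Hypothesis~\ref{hyp0}. The argument is routine; the only points requiring attention are keeping $|\phi_i|$, the number of random strings, the proof length, and hence $m$, all $\Theta(n)$ after sparsification, and checking that the soundness case kills a constant \emph{fraction} of the clauses of $\psi_i$ rather than a single one — the gap, and the value of $r$, then fall out of the construction automatically.
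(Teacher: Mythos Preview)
Your proof is correct and follows essentially the same route as the paper's: sparsify, apply \textsc{LPC} to each $\phi_i$ to obtain a linear-size proof and a linear number of tests, write each verifier predicate as a constant-size CNF, convert to width~3, and argue that soundness~$1/2$ forces a constant fraction of unsatisfied clauses; then aggregate over the $2^{\mu n}$ sparsified formulas to contradict \textsc{ETH}. The only noteworthy difference is cosmetic: you are explicit about non-adaptivity (absorbing the cost into the constant~$E$) and you give a concrete value $r=1-\tfrac{1}{2^{E+1}E}$, whereas the paper leaves these implicit and writes $r=1-\tfrac{1}{2C'_1}$ for an unspecified constant~$C'_1$.
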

\begin{proof}
{Suppose that $\3sat{} \in {\textsc{PCP}}_{1,1/2}[\log{|\phi|} + D,
E]$, where $|\phi|$ is the sum of the lengths of clauses  in the
\3sat{} instance, $D$ and $E$ are constants.}

{Given an $\epsilon>0$, let $\epsilon'$ such that
$0<\epsilon'<\epsilon$.
 Given an instance~$\phi$ of
\textsc{3 sat} on $n$ variables, we apply the sparsification lemma
 (with $\epsilon'$) to get
$2^{\epsilon' n}$ instances $\phi_i$ on at most $n$ variables. Since
each variable appears at most $c_{\epsilon'}$ times in $\phi_i$, the
global size of $\phi_i$ is $|\phi_i|\leq c_{\epsilon'}n$.}

{Then for each formula $\phi_i$ we use the previous PCP assumption.
The size of the proof is at most $E2^{|R|}=c'|\phi_i|\leq cn$ for
some constants $c',c$ that depend on $\epsilon'$ (where
$|R|=\log{n}+D$ is the number of random bits) since $E2^{|R|}$ is
the total number of bits that we read in the proof. Take one
variable for each bit in the proof: $x_1,\cdots,x_{cn}$. For each
random string $R$: take all the $2^E$ possibilities for the $E$
variables read, and write a CNF formula which is satisfied if and
only if the verifier accepts. This can be done with a formula with a
constant number of clauses, say $C_1$, each clause having a constant
number of variables, say $C_2$ ($C_1$ and $C_2$ depends on $E$).}

{If we consider the CNF formed by all theses CNF for all the random
clauses, we get a CNF with $C_12^{|R|}$ clauses on variables
$x_1,\cdots,x_{cn}$. The clauses are on $C_2$ variables but by
adding a constant number of variables we can replace a clause on
$C_2$ variables by an equivalent set of clauses on 3 variables. This
way we get a 3-CNF formula and multiply the number of variables and
the number of clauses by a constant, so they are still linear in
$n$. For each $R$ you have a set of say $C'_1$ clauses.}

{Suppose that we start from a satisfiable formula $\phi_i$. Then
there exists a proof for which the verifier always accepts. By
taking the corresponding values for the variables $x_i$, and
extending it properly to the new variables y, all the clauses are
satisfied.}

{Suppose that we start from a non satisfiable formula $\phi_i$. Then
for any proof (i.e. any truth values of variables), the verifier
rejects for at least half of the random strings. If the verifier
rejects for a random string $R$, then in the set of clauses
corresponding to this variable at least one clause is not satisfied.
It means that among the $C'_12^{|R|}$ clauses (total number of
clauses), at least $1/2\cdot 2^{|R|}$ are not satisfied, ie a
fraction $1/(2C'_1)$ of the clauses.}

{Then either $m=C'_12^{|R|}=O(n)$ clauses are satisfiable, or at
least $m/(2C'_1)$ clauses are not satisfied by each assignment.
Distinguishing between these sets in time $2^{o(m)}$ would determine
whether $\phi_i$ is satisfiable or not in $2^{o(n)}$. Doing this for
each $\phi_i$ would solve \3sat{} in time
$p(n)2^{\epsilon'n}+2^{\epsilon' n}O(2^{o(n)})=O(2^{\epsilon n})$
(where $p$ is a polynomial). This is valid for any $\epsilon>0$ so
it would contradicting \textsc{ETH}.}\qed\end{proof}

%
Dealing with \is{}, it is easy to see that, for any increasing and
unbounded function $r(n)$, the problem is approximable within
ratio~$1/r(n)$ in subexponential time (recall that ratios~$n^{\epsilon-1}$ are are very unlikely to be achieved in
polynomial time). Indeed, simply consider all the subsets of~$V$
of size at most~$n/r(n)$ and return the largest independent set
among these sets. If a maximum independent set has size at most~$n/r(n)$ then the algorithm finds it, otherwise the algorithm
outputs a solution of size~$n/r(n)$, while the size of an optimum
solution is at most~$n$. The running time of the algorithm
is~$O^*({n \choose n/r(n)})$ that is subexponential in~$n$.

Let us note that \is{} has the so called self-improvement
property~\cite{gj} claiming, roughly speaking, that either it is
polynomially approximable by a polynomial time approximation schema,
or no polynomial algorithm exists that guarantees some constant
approximation ratio, unless \pnp{}.

With a similar proof, the above self-improvement property can be
proved for \is{} also in the case of parameterized approximation.
\begin{lemma}\cite{escoffier-paschos-tourniaire}\label{selfimp}
The following statements are equivalent for \is{}:
\begin{itemize}
\item there exists $r\in (0,1)$ such that there exists an $r$-approximation parameterized algorithm;
\item for any $r\in (0,1)$ there exists an $r$-approximation parameterized algorithm.
    \end{itemize}
\end{lemma}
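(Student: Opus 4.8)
The forward implication is immediate, since the second statement is a strengthening of the first; the whole content is to \emph{amplify} one fixed constant ratio up to every constant ratio. The plan is to use the classical graph-power trick (cf.\ \cite{gj}). For a graph $G$ let $G^{[t]}$ denote its $t$-fold lexicographic power: the vertex set is $V(G)^t$, and two distinct tuples are adjacent precisely when, at the first coordinate in which they differ, the corresponding vertices of $G$ are adjacent. I will rely on two facts: (a) the classical identity $\alpha(G^{[t]}) = \alpha(G)^t$; and (b) from any independent set $W$ of $G^{[t]}$ one can compute, in time polynomial in $|W|$ and $|V(G^{[t]})|$, an independent set of $G$ of size at least $|W|^{1/t}$. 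Fact (b) follows by iterating the one-step observation for $G^{[2]}=G[G]$: projecting $W$ onto the first coordinate yields an independent set $I$ of $G$, each fibre $W_g=\{h:(g,h)\in W\}$ is an independent set of the second factor, and $|W|=\sum_{g\in I}|W_g|$ forces $\max\{|I|,\max_g|W_g|\}\ge |W|^{1/2}$; peeling off one coordinate at a time handles $G^{[t]}$.

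Given an $r$-approximation parameterized algorithm $A$ for \is{} with $r\in(0,1)$ and a target ratio $r'\in(0,1)$, choose a fixed integer $t$ with $t\ge \ln(1/r)/\ln(1/r')$, so that $r^{1/t}\ge r'$. On input $(G,k)$ the amplified algorithm builds $G^{[t]}$ (which has $n^t$ vertices, a polynomial in $n$ since $t$ is constant) and runs $A$ on $(G^{[t]},\,k^{t})$. If $A$ returns an independent set $W$ of $G^{[t]}$ with $|W|\ge r\,k^{t}$, apply fact (b) to extract an independent set $W'$ of $G$ with $|W'|\ge |W|^{1/t}\ge (r\,k^{t})^{1/t}=r^{1/t}k\ge r'k$, and output $W'$. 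If instead $A$ answers ``no'', then $\alpha(G^{[t]})<k^{t}$, hence $\alpha(G)<k$ by fact (a), and we may safely answer ``no''. So the amplified algorithm is a correct $r'$-approximation parameterized algorithm.

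For the running time, if $A$ runs in time $f(k)\cdot n^{O(1)}$, then on $(G^{[t]},k^{t})$ it runs in $f(k^{t})\cdot (n^{t})^{O(1)}$, which, together with the polynomial construction of $G^{[t]}$ and the polynomial extraction of fact (b), is $g(k)\cdot n^{O(1)}$ with $g(k):=f(k^{t})$ — still FPT, because $t$ is a constant depending only on $r$ and $r'$. The only points needing care are exactly this parameter blow-up $k\mapsto k^{t}$ (harmless for FPT, which is why the statement is confined to that regime) and the verification that the extraction of fact (b) is genuinely constructive and polynomial-time; neither is a real obstacle.
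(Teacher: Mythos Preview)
Your proof is correct and is precisely the classical self-improvement argument via lexicographic (graph) products that the paper has in mind: the paper does not spell out a proof of this lemma but merely remarks that it follows ``with a similar proof'' to the polynomial-time self-improvement property from~\cite{gj} and defers to~\cite{escoffier-paschos-tourniaire}. Your write-up is exactly that adaptation to the FPT setting, including the key observation that the blow-up $k\mapsto k^{t}$ is harmless because $t$ depends only on $r$ and $r'$.
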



\subsection{Expander Graphs}

\begin{definition}
A graph $G$ is a $(n,d,\alpha)$-expander graph if (i) $G$ has $n$
vertices, $(ii)$ $G$ is d-regular, $(iii)$ all the eigenvalues
$\lambda$ of $G$ but the largest one is such that $|\lambda|\leq
\alpha d$.
\end{definition}
{\bf Fact 1.} {\it For any $k\in \mathbb{N}^*$ and any $\alpha>0$
there exists $d$ and a $(k^2,d,\alpha)$-expander graph. Moreover,
$d$ depends only on $\alpha$, and this graph can be
computed in polynomial time for every fixed $\alpha$.}\\

This fact follows from the following lemmas.
\begin{lemma}[\cite{gg}, or Th. 8.1 in~\cite{survey}]
For every positive integer $k$, there exists a
$(k^2,8,5\sqrt{2}/8)$-expander graph, computable in polynomial time.
\end{lemma}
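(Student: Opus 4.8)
\ This is the Gabber--Galil (Margulis-type) expander, so the plan is to exhibit the construction explicitly, reduce property $(iii)$ to a single Fourier-analytic norm estimate, and then invoke that estimate from the cited sources rather than reprove it. First I would take $V=\mathbb{Z}_k\times\mathbb{Z}_k$, so that $|V|=k^2$, and join every vertex $(x,y)$ to the eight (not necessarily distinct) vertices
\[
(x\pm 2y,\,y),\qquad (x,\,y\pm 2x),\qquad \bigl(x\pm(2y+1),\,y\bigr),\qquad \bigl(x,\,y\pm(2x+1)\bigr),
\]
all arithmetic taken modulo $k$. Each of these eight affine maps of $\mathbb{Z}_k^2$ is a bijection (its linear part is unipotent, hence invertible over $\mathbb{Z}_k$), and the eight maps form four inverse pairs, so the resulting multigraph is undirected and $8$-regular. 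It plainly has $n=k^2$ vertices, and its adjacency list is produced by computing, for each of the $k^2$ vertices, its $8$ neighbours, i.e.\ in $O(k^2)=O(n)$ arithmetic operations; hence it is computable in polynomial time. What remains is property $(iii)$: the largest eigenvalue of the adjacency matrix $M$ is the trivial value $8$, and I must show every other eigenvalue has absolute value at most $5\sqrt2$, which is exactly $\alpha d$ for $\alpha=5\sqrt2/8$ and $d=8$.

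The next step is to linearise $M$ over the characters of $\mathbb{Z}_k^2$. Writing $\chi_{a,b}(x,y)=\exp\!\bigl(\tfrac{2\pi i}{k}(ax+by)\bigr)$ for $(a,b)\in\mathbb{Z}_k^2$, a linear map $v\mapsto Av$ with integer matrix $A$ satisfies $\chi_{a,b}\circ A=\chi_{A^{\mathsf T}(a,b)}$, while the ``$+1$'' affine parts only multiply a character by a $k$-th root of unity. Thus in the orthonormal basis $\{\chi_{a,b}\}$ the matrix $M$ becomes a matrix $\widetilde M$ each of whose entries is a sum of roots of unity, with $\chi_{0,0}$ (the all-ones vector) as top eigenvector for the eigenvalue $8$. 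Since $M$ is symmetric, $W=\chi_{0,0}^{\perp}$ is invariant, and $\max\{|\lambda|:\lambda\neq 8\}$ equals the operator norm of $\widetilde M$ restricted to $W$, i.e.\ to the span of the characters of nonzero frequency $(a,b)\neq(0,0)$. So it suffices to show $\|\widetilde M|_{W}\|\le 5\sqrt2$.

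This last estimate is the one genuinely hard point, and it is exactly what is established in~\cite{gg} (and recorded as Theorem~8.1 of~\cite{survey}); the plan is to cite it rather than reprove it. The argument there passes to the continuous model: one shows that the operator $f\mapsto\sum_{\gamma} f\circ\gamma$ on the mean-zero subspace of $L^2(\mathbb{T}^2)$, where $\gamma$ runs over the eight measure-preserving affine maps of the $2$-torus analogous to the ones above, has operator norm at most $5\sqrt2<8$ --- the point being that the shear maps fix no nonzero frequency and, iterated, disperse the $\ell^2$-mass of $\widehat f$ enough that the competing phases cancel --- and then transfers this bound to the finite quotient $\mathbb{Z}_k^2$. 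I expect this quantitative norm bound to be the main obstacle were one to want a self-contained proof; the explicit construction, the $8$-regularity bookkeeping, the polynomial-time computability, and the reduction to $\|\widetilde M|_{W}\|$ are all routine. Since only the statement of the lemma is needed in what follows, we take the bound of~\cite{gg} as given.
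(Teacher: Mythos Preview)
The paper provides no proof of this lemma at all: it is stated with a bracketed citation to~\cite{gg} and to Theorem~8.1 of~\cite{survey} and then used as a black box in the proof of Fact~1. Your proposal therefore goes well beyond the paper, giving an explicit description of the Gabber--Galil construction on $\mathbb{Z}_k\times\mathbb{Z}_k$, checking regularity and polynomial-time computability, and reducing the spectral condition to the operator-norm estimate that you then (correctly) defer to the same cited sources. As a sketch this is sound and is exactly the content of the references the paper invokes; just be aware that the precise affine maps and the constant $5\sqrt{2}$ are tied together, and different presentations of the Margulis/Gabber--Galil family use slightly different generating sets, so if you want the specific bound $5\sqrt{2}/8$ you should match the maps to those in the source you cite rather than a variant.
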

If $G$ is a graph with adjacency matrix $M$, let us denote $G^k$ the
graph with adjacency matrix $M^k$.
\begin{lemma}[Fact 1.2 in~\cite{exp}]
If $G$ is a $(n,d,\alpha)$-expander graph, then $G^k$ is a
$(n,d^k,\alpha^k)$-expander graph.
\end{lemma}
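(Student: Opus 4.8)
The plan is to work entirely with the spectral decomposition of the adjacency matrix $M$ of $G$. Since $G$ is undirected, $M$ is real and symmetric, hence it admits an orthonormal eigenbasis $v_1,\dots,v_n$ with real eigenvalues $\lambda_1\geq\lambda_2\geq\cdots\geq\lambda_n$. The matrix $M^k$ of $G^k$ has exactly the same eigenvectors, now with eigenvalues $\lambda_1^k,\dots,\lambda_n^k$, and this single observation will yield all three defining properties of a $(n,d^k,\alpha^k)$-expander.

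Condition (i) is immediate, since $M^k$ is again $n\times n$, so $G^k$ has $n$ vertices. For condition (ii), $d$-regularity of $G$ means that the all-ones vector $\mathbf{1}$ satisfies $M\mathbf{1}=d\,\mathbf{1}$; iterating gives $M^k\mathbf{1}=d^k\mathbf{1}$, so every row of $M^k$ sums to $d^k$ and $G^k$ is $d^k$-regular (using the convention, already implicit in the statement, that the adjacency matrix of $G^k$ may have entries larger than $1$ and a nonzero diagonal, i.e.\ $G^k$ is a weighted graph with loops, which does not affect any expansion property). For condition (iii), recall that for a $d$-regular graph every eigenvalue satisfies $|\lambda_i|\leq d$, so the top eigenvalue is $\lambda_1=d$, with eigenvector proportional to $\mathbf{1}$; moreover $\alpha<1$ forces $G$ to be connected and non-bipartite, so $\lambda_1=d$ is the unique eigenvalue of maximum modulus. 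Passing to $M^k$, its largest eigenvalue is therefore $\lambda_1^k=d^k$, while for every other eigenvalue $|\lambda_i^k|=|\lambda_i|^k\leq(\alpha d)^k=\alpha^k d^k$ by property (iii) of $G$. Hence all eigenvalues of $M^k$ but the largest are bounded in absolute value by $\alpha^k d^k$, which is precisely condition (iii) for a $(n,d^k,\alpha^k)$-expander.

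There is no genuinely hard step; the only point needing a moment's care is the bookkeeping of ``the largest eigenvalue'' under powering — one must check that the eigenvalue singled out as largest for $M^k$ is exactly $d^k$, the $k$-th power of the one singled out for $M$, and not, say, $\lambda_n^k$ when $k$ is even and $G$ is nearly bipartite. This is exactly what the remark above about the strict modulus-dominance of $d$ in the regime $\alpha<1$ settles. An alternative that sidesteps any appeal to connectedness is to read condition (iii) as a bound on the second-largest eigenvalue \emph{in absolute value}, which is manifestly multiplicative under taking the $k$-th matrix power, so that the lemma follows directly from $|\lambda_i^k|=|\lambda_i|^k$.
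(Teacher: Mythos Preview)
Your proof is correct and follows the same approach as the paper, which simply states in one line that $G^k$ is obviously $d^k$-regular and that the eigenvalues of $G^k$ are the $k$-th powers of those of $G$. Your version is a careful elaboration of this same spectral argument, with the added bookkeeping about which eigenvalue remains the largest after powering.
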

{\it Proof.} $G^k$ is obviously $d^k$ regular, and the eigenvalues
of $G^k$ are the eigenvalues of $G$ to the power of $k$.~\qed

\noindent {\it Proof of Fact 1.} Take $\alpha>0$ and let $p$ be the
smallest integer such that $(5\sqrt{2}/8)^{p}\leq \alpha$. $G^{p}$
is as required. The proof of Fact~1 is completed.~\qed

Let $G$ be a graph on $n$ vertices and $H$ be a
$(n,d,\alpha)$-expander graph. Let $t$ be a positive integer. We
build the graph $G'_t$ on $N=nd^{t-1}$ vertices: each vertex
corresponds to a $(t-1)$-random walk $x=(x_1,\cdots,x_t)$ on $H$
(meaning that $x_1$ is chosen at random, and $x_{i+1}$ is chosen
randomly in the set of neighbors of $x_i$), and two vertices
$x=(x_1,\cdots,x_t)$ and $y=(y_1,\cdots,y_t)$ in $G'_t$ are adjacent
iff $\{x_1,\cdots,x_t,y_1,\cdots,y_t\}$ is a clique in $G$.
\begin{theorem}[claims 3.15 and 3.16 in~\cite{survey}]\label{th1}
Let $G$ be a graph on $n$ vertices and $H$ be a
$(n,d,\alpha)$-expander graph. If $b>6\alpha$, then:
\begin{itemize}
    \item If $\omega(G)\leq bn$ then $\omega(G'_t)\leq (b+2\alpha)^t
    N$;
    \item If $\omega(G)\geq bn$ then $\omega(G'_t)\geq (b-2\alpha)^t
    N$.
\end{itemize}
\end{theorem}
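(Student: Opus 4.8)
The plan is to prove Theorem~\ref{th1} in two stages: a combinatorial reduction that turns both inequalities into a single statement about counting walks, followed by a spectral estimate of that count.

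First I would establish the structural fact that a set $\mathcal{C}$ of vertices of $G'_t$ with $|\mathcal{C}|\ge 2$ is a clique in $G'_t$ if and only if $U(\mathcal{C}):=\bigcup_{x\in\mathcal{C}}\{x_1,\dots,x_t\}$ is a clique in $G$. The ``if'' part is immediate from the definition of adjacency in $G'_t$; for ``only if'', any two elements of $U(\mathcal{C})$ lie together in some $\{x_1,\dots,x_t,y_1,\dots,y_t\}$ with $x,y\in\mathcal{C}$ (possibly $x=y$, which is where $|\mathcal{C}|\ge 2$ is used), and that set is a clique of $G$. It follows that $\omega(G'_t)$ equals the maximum, over all cliques $K$ of $G$, of the number of $(t{-}1)$-walks of $H$ whose vertex set is contained in $K$ (the degenerate case where this maximum is at most $1$ being trivial). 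Hence it suffices to prove: for every $W\subseteq V(H)$ with $|W|=\beta n$, the number $N_t(W)$ of $t$-vertex walks of $H$ that stay inside $W$ satisfies $N_t(W)\le(\beta+2\alpha)^tN$, and also $N_t(W)\ge(\beta-2\alpha)^tN$ when $\beta>6\alpha$. Part~1 of the theorem then follows by applying the upper bound to a clique $K$ with $|K|\le bn$, and part~2 by applying the lower bound to a set $W$ sitting inside a maximum clique of $G$ (of size at least $bn$), using that both bounds are monotone in $\beta$.

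For the walk count, let $A_W$ be the adjacency matrix of the induced subgraph $H[W]$, so that $N_t(W)=\mathbf{1}_W^{\top}A_W^{\,t-1}\mathbf{1}_W$. Since $H$ is $d$-regular with second-largest eigenvalue at most $\alpha d$ in modulus, $M_H=\tfrac{d}{n}\mathbf{1}\mathbf{1}^{\top}+E$ with $\|E\|\le\alpha d$, and restricting to $W$ gives $A_W=\beta d\,uu^{\top}+E_W$ with $u=\mathbf{1}_W/\sqrt{|W|}$ a unit vector and $\|E_W\|\le\alpha d$. This already yields $\|A_W\|\le(\beta+\alpha)d$, hence $N_t(W)\le|W|\,\|A_W\|^{t-1}\le\beta(\beta+\alpha)^{t-1}N\le(\beta+2\alpha)^tN$. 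For the lower bound I would invoke Perron--Frobenius: the nonnegative symmetric matrix $A_W$ has top eigenvalue $\rho=\|A_W\|$ with a nonnegative unit eigenvector $\phi$, and $\rho\ge u^{\top}A_Wu\ge(\beta-\alpha)d$. Projecting $A_W\phi=\rho\phi$ onto $u^{\perp}$ and using $\|E_W\phi\|\le\alpha d$ gives $\|\phi-\langle u,\phi\rangle u\|\le\alpha d/\rho\le\alpha/(\beta-\alpha)$, so $\langle u,\phi\rangle^2\ge 1-\bigl(\alpha/(\beta-\alpha)\bigr)^2=\beta(\beta-2\alpha)/(\beta-\alpha)^2$. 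Expanding $u$ in an orthonormal eigenbasis of $A_W$ and bounding the (possibly negative) contribution of the non-top eigenvalues by $\rho^{t-1}(1-\langle u,\phi\rangle^2)$ in absolute value gives $N_t(W)\ge|W|\bigl(2\langle u,\phi\rangle^2-1\bigr)\rho^{t-1}\ge\beta\bigl(2\langle u,\phi\rangle^2-1\bigr)(\beta-\alpha)^{t-1}N$; an elementary computation, in which the hypothesis $\beta>6\alpha$ provides the slack, reduces this to $N_t(W)\ge(\beta-2\alpha)^tN$.

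I expect the structural reduction and the upper bound to be routine; the crux is the lower bound on $N_t(W)$, i.e.\ showing that a random walk on $H$ conditioned to remain in $W$ does not decay much faster than $\beta^t$. The delicate point is controlling how far the Perron eigenvector of $H[W]$ can drift from the uniform vector on $W$ --- equivalently, that the conditioned walk is not pushed into a small, badly-expanding part of $W$ --- and then checking that the resulting constants genuinely beat $(\beta-2\alpha)^t$; this is precisely where the assumption $b>6\alpha$ enters.
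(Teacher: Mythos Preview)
The paper does not prove this statement at all: the bracket ``claims 3.15 and 3.16 in~\cite{survey}'' signals that it is imported wholesale from the Hoory--Linial--Wigderson survey and used as a black box to obtain Theorem~\ref{theoappendix}. There is thus no in-paper argument to compare your proposal against.

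On its own merits, your outline is the standard argument and it goes through. The structural reduction---$\omega(G'_t)$ equals the maximum over cliques $K$ of $G$ of the number of $(t{-}1)$-walks of $H$ confined to $K$---is exactly how the cited survey sets things up, and your upper bound via $\|A_W\|\le(\beta+\alpha)d$ is immediate. For the lower bound your Perron--Frobenius route is fine; to close the ``elementary computation'' you defer: writing $q=\alpha/(\beta-\alpha)$ one has $2c_1^2-1\ge 1-2q^2$, and the target inequality $\beta(2c_1^2-1)(\beta-\alpha)^{t-1}\ge(\beta-2\alpha)^t$ is tightest at $t=1$ (the ratio of the two sides is a constant times $((\beta-2\alpha)/(\beta-\alpha))^{t-1}\le 1$). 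At $t=1$ it unwinds to $\beta^2-3\alpha\beta+\alpha^2\ge 0$, which already holds for $\beta>\tfrac{3+\sqrt 5}{2}\,\alpha$; the hypothesis $b>6\alpha$ is therefore amply sufficient, as you anticipated.
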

We are now able to prove the gap amplification with linear size
amplification.
\begin{theorem}\label{theoappendix}
Let $G$ be a graph on $n$ vertices (for a sufficiently large $n$)
and $a>b$ be two positive real numbers. Then for any real $r>0$ one
can build in polynomial time a graph $G_r$ such that:
\begin{itemize}
    \item $G_r$ has $N\leq Cn$ vertices for $C$ independent of $G$
    ($C$ may depend on $r$);
    \item If $\omega(G)\leq bn$ then $\omega(G_r)\leq b_rN$;
    \item If $\omega(G)\geq an$ then $\omega(G_r)\geq a_rN$;
    \item $b_r/a_r\leq r$.
\end{itemize}
\end{theorem}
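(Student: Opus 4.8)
The plan is to take $G_r:=G'_t$ for a suitably chosen expander $H$ and walk-length parameter $t$, and to read off the two required bounds by invoking Theorem~\ref{th1} \emph{twice}: once with threshold $b$ and once with a threshold slightly below $a$. The point is that $G'_t$ is one fixed graph, so the two bullets of Theorem~\ref{th1}, applied with two different values of its threshold parameter, both hold for it simultaneously.

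First I would reduce to the case where $|V(G)|$ is a perfect square, so that Fact~1 supplies an expander on exactly that many vertices. Fix a small $\epsilon'>0$ (constrained below) and pad $G$ with isolated vertices up to the next perfect square $m^2$; since $m^2\le n+2\sqrt{n}+1$, the "sufficiently large $n$" hypothesis gives $m^2\le(1+\epsilon')n$, and adding isolated vertices leaves the clique number unchanged, so the padded graph $G_0$ satisfies $\omega(G_0)=\omega(G)$. Hence $\omega(G)\le bn$ implies $\omega(G_0)\le b\,m^2$, and $\omega(G)\ge an$ implies $\omega(G_0)\ge a'\,m^2$ where $a':=a/(1+\epsilon')$. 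Choosing $\epsilon'$ small enough (possible since $a>b$) keeps $a'>b$.

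Next I would fix the expander parameters. Pick $\alpha>0$ with $\alpha<b/6$ and $4\alpha<a'-b$ (both constraints satisfiable, and $\alpha$ depends only on $a,b$); by Fact~1 there are a degree $d=d(\alpha)$ and an $(m^2,d,\alpha)$-expander $H$, computable in polynomial time. Build $G'_t$ from $G_0$ and $H$ as in the construction preceding Theorem~\ref{th1}, for a walk-length parameter $t$; it has $N=m^2 d^{t-1}\le (1+\epsilon')d^{t-1}\,n=:Cn$ vertices with $C$ depending only on $a,b,r$ (not on $G$), and it is computable in polynomial time for fixed $\alpha,t$. Apply Theorem~\ref{th1} with threshold $b$ (legitimate since $b>6\alpha$): if $\omega(G_0)\le b\,m^2$ then $\omega(G'_t)\le(b+2\alpha)^t N$. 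Apply Theorem~\ref{th1} again with threshold $a'$ (legitimate since $a'>b>6\alpha$): if $\omega(G_0)\ge a'\,m^2$ then $\omega(G'_t)\ge(a'-2\alpha)^t N$. Setting $b_r:=(b+2\alpha)^t$ and $a_r:=(a'-2\alpha)^t$, the choice $4\alpha<a'-b$ gives $b+2\alpha<a'-2\alpha$, so $b_r/a_r=\bigl((b+2\alpha)/(a'-2\alpha)\bigr)^t\to 0$; choosing $t$ large enough (depending on $a,b,r$) makes $b_r/a_r\le r$, and combining with the previous paragraph gives all four claimed properties for $G_r:=G'_t$.

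The one genuinely delicate observation is the double application of Theorem~\ref{th1} with two distinct thresholds to the same graph $G'_t$; everything else is bookkeeping: the perfect-square padding and its (mild) effect on the thresholds, checking that $\alpha,d,t,C$ depend only on $a,b,r$ and not on $G$, and the polynomial-time claims. The "sufficiently large $n$" hypothesis is used only to absorb the $2\sqrt{n}+1$ padding term into the factor $(1+\epsilon')$.
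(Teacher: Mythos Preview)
Your proposal is correct and follows essentially the same route as the paper's own proof: pad to a perfect square, invoke Fact~1 to obtain an expander on that many vertices, apply Theorem~\ref{th1} with the two thresholds $b$ and (a slightly reduced) $a$, and then choose $\alpha$ small and $t$ large so that $\bigl((b+2\alpha)/(a'-2\alpha)\bigr)^t\le r$. The only cosmetic differences are your parameterization $a'=a/(1+\epsilon')$ versus the paper's $a(1-\epsilon)$, and that you spell out the two separate invocations of Theorem~\ref{th1} and the constraint $a'>6\alpha$ more explicitly than the paper does.
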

{\it Proof.} Let $k=\lceil \sqrt{n}\rceil$. We modify $G$ by adding
$k^2-n$ dummy (isolated) vertices. Let $G'$ be the new graph. It has
$n'=k^2$ vertices. Note that $n'\leq
(\sqrt{n}+1)^2=n+2\sqrt{n}+1=n+o(n)$. Let $n$ be such that
$1-\epsilon\leq n/n'\leq 1$ for a small $\epsilon$. Thanks to Fact
1, we consider a $(k^2,d,\alpha)$-expander graph $H$ for a
sufficiently small $\alpha$ (the value of which will be fixed
later). According to Theorem~\ref{th1} (applied on $G'$) we build in
polynomial time a graph $G'_t$ on $N=n'd^t$ vertices such that
(choosing $\alpha<b/6$):
\begin{itemize}
    \item If $\omega(G)\leq bn$ then $\omega(G')=\omega(G)\leq bn'$, hence $\omega(G'_t)\leq (b+2\alpha)^t
    N$;
    \item If $\omega(G)\geq an$ then $\omega(G')=\omega(G)\leq an'(1-\epsilon)$, hence
    $\omega(G'_t)\geq (a(1-\epsilon)-2\alpha)^tN$.
\end{itemize}
We choose $\epsilon$ and $\alpha$ such that
$a(1-\epsilon)-2\alpha>b+2\alpha$, and then $t$ such that
$(a(1-\epsilon)-2\alpha)^t/(b+2\alpha)^t\leq r$. The number of
vertices of $G'_t$ is clearly linear in $n$ (first point of the
theorem). $b_r=(b+2\alpha)^t$ and $a_r=(a(1-\epsilon)-2\alpha)^t$
fulfills items 2, 3 and 4.~\qed

\section{Parameterized inapproximability bounds}\label{secparam}

It is shown in~\cite{chen} that, under \textsc{ETH}, for any
function $f$ no algorithm running in time $f(k)n^{o(k)}$ can
determine whether there exists an independent set of size $k$, or
not (in a graph with $n$ vertices). A challenging question is to
obtain a similar result for approximation algorithms for \is{}. In
the sequel, we propose a reduction from \m3sat{} to \is{} that,
based upon the negative result of Corollary~\ref{2-query-cor2}, only
gives a negative result for {\it some} function $f$ (because
Corollary~\ref{2-query-cor2} only avoids {\it some} subexponential
running time). However, this reduction gives the desired
inapproximability result if Hypothesis~\ref{hyp0}, which is an
enforcement of Corollary~\ref{2-query-cor2}, is used.

Based upon Hypothesis~\ref{hyp0}, the
following theorem on parameterized inapproximability bound can be proved. Its proof essentially combines the parameterized reduction in~\cite{chen} and a classic gap-creating reduction.
\begin{theorem}\label{paraminapis}
Under Hypothesis~\ref{hyp0} and \textsc{ETH}, for every
$\epsilon>0$, no parameterized approximation algorithm for \is{}
running in time $f(k)N^{o(k)}$ can achieve approximation ratio
$r+\epsilon$ in graphs of order~$N$.
\end{theorem}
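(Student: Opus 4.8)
The plan is to combine two ingredients. First, the classical parameterized reduction of Chen et al.~\cite{chen} from \m3sat{} (or \3sat{}) to \is{}: given a \m3sat{} instance $\phi$ with $m$ clauses, one builds a graph $G_\phi$ on $O(m)$ vertices, partitioned into $m$ blocks of constant size (one block per clause, with a vertex for each satisfying partial assignment to the clause's variables), with edges within each block and edges between conflicting partial assignments in different blocks. The key structural properties are: (a) the maximum independent set of $G_\phi$ picks at most one vertex per block, so $\omega_{\mathrm{IS}}(G_\phi)\le m$, with equality iff $\phi$ is satisfiable; and more generally the maximum number of simultaneously satisfiable clauses equals the size of a maximum independent set; (b) ``$k$ clauses satisfiable'' corresponds exactly to ``independent set of size $k$'', so the parameter is preserved linearly, and by the $f(k)m^{o(k)}$-hardness of deciding $k$ clauses (which follows from the ETH-based lower bound machinery once the gap version is in hand) one gets the $N^{o(k)}$ lower bound. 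Since this reduction is linear in the number of clauses, it transports the \emph{gap} of Hypothesis~\ref{hyp0} to \is{}: from an instance where either $\ge(1-\varepsilon)m$ clauses are satisfiable or $\le(r+\varepsilon)m$ are, we obtain a graph $G_\phi$ on $N = \Theta(m)$ vertices with either $\omega_{\mathrm{IS}}(G_\phi)\ge(1-\varepsilon)m$ or $\omega_{\mathrm{IS}}(G_\phi)\le(r+\varepsilon)m$.

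Second, I would feed this gapped \is{} instance into the expander-based gap amplification of Theorem~\ref{theoappendix} (stated for $\omega$, the clique number, so one works with the complement, or equivalently one restates Theorem~\ref{theoappendix} for independent sets, which is immediate). With $a=1-\varepsilon$ and $b=r+\varepsilon$ (so $a>b$ for small $\varepsilon$, since $r<1$) and target ratio parameter chosen appropriately, Theorem~\ref{theoappendix} produces in polynomial time a graph $G_r$ on $N'\le CN$ vertices with $\omega_{\mathrm{IS}}(G_r)\le b_{r'}N'$ in the NO case and $\omega_{\mathrm{IS}}(G_r)\ge a_{r'}N'$ in the YES case, where $b_{r'}/a_{r'}$ can be pushed below any prescribed constant. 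Thus an $(r+\epsilon)$-approximation algorithm for \is{} on $G_r$ would distinguish the two cases. One subtlety: the expander amplification is a \emph{polynomial-time} reduction that multiplies the vertex count by a constant $C$ depending on the target ratio but \emph{not} on the instance; however it changes the parameter $k$ super-linearly (the $t$-walk construction turns an independent set of size $k$ into one of size roughly $k^t/n^{t-1}\cdot N$-type quantities), so I must check that it preserves the $N^{o(k)}$-style lower bound. The cleanest route is to apply the gap amplification with a \emph{constant} gap target first (independent of $\epsilon$) so that $t$ is a fixed constant; then $k \mapsto k'$ is a fixed polynomial relationship, $N' = \Theta(N)$, and an $f(k')N'^{o(k')}$ algorithm pulls back to an $f'(k)N^{o(k)}$ algorithm for the pre-amplification problem.

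Then I would chain the contradiction: under Hypothesis~\ref{hyp0} and \textsc{ETH}, distinguishing the two cases of gapped \m3sat{} requires time $2^{\Omega(m)}$, hence (via Chen's reduction, linear in $m$, hence $N=\Theta(m)$) distinguishing the two \is{} cases requires $2^{\Omega(N)}$, and moreover the parameterized lower bound gives that no $f(k)N^{o(k)}$ algorithm can do it. A hypothetical $(r+\epsilon)$-approximation algorithm for \is{} running in $f(k)N^{o(k)}$, composed with the polynomial-time expander amplification, would yield exactly such an algorithm — contradiction. The one care point is bookkeeping the ratio: I pick the amplification target so that $b_{r'}/a_{r'} < r+\epsilon$ and also $b_{r'} < $ (approximation-guaranteed fraction), so that the approximation output lands strictly between the two regimes; since $a_{r'},b_{r'}$ are explicit and $r<1$ is fixed, choosing the constant gap small enough does this.

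The main obstacle I anticipate is \textbf{parameter control through the expander amplification}. The gap-amplification of Theorem~\ref{theoappendix} is designed to preserve the \emph{number of vertices} up to a constant factor, but it does not obviously preserve the \emph{parameter} $k$ in the delicate way needed to keep an $N^{o(k)}$ lower bound meaningful — an independent set of size $k$ in $G$ becomes an independent set of size $\Theta(k^t N/n^{t-1})$ in $G'_t$, which is fine when $\omega_{\mathrm{IS}}(G)=\Theta(n)$ (the only regime we use) but requires spelling out. Making sure $t$ stays a fixed constant (by using a constant target gap) is what rescues this, and verifying the resulting polynomial relationship $k' = \Theta(k^t)$ still yields $f'(k)N^{o(k)}$ from $f(k')N'^{o(k')}$ is the step that needs honest care. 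Everything else — the structure of Chen's reduction, the linearity in $m$, the ETH chain — is routine once that is pinned down.
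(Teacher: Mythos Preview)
Your approach has a genuine gap in the parameter regime, and the expander step is a red herring that does not fix it.

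The reduction you describe (one block per clause, a vertex per satisfying partial assignment) is the FGLSS reduction, not the grouping reduction of Chen et al. It produces a graph with $N=\Theta(m)$ vertices in which the target independent-set size is $k=(1-\varepsilon)m=\Theta(N)$. In that regime an $f(k)N^{o(k)}$ algorithm says nothing: with $k=\Theta(N)$ the factor $f(k)$ alone can be $2^{2^{m}}$, so the running time is not bounded by $2^{o(m)}$ and you cannot contradict Hypothesis~\ref{hyp0}. Your sentence ``the parameterized lower bound gives that no $f(k)N^{o(k)}$ algorithm can do it'' is exactly the unjustified step: a $2^{\Omega(N)}$ lower bound does \emph{not} yield an $f(k)N^{o(k)}$ lower bound when $k=\Theta(N)$ --- indeed, any problem is trivially ``FPT'' for a parameter that is linear in the input size. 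The expander amplification of Theorem~\ref{theoappendix} keeps $N'=\Theta(N)$ and $k'=\Theta(N')$, so it does not change this regime; your worry about ``parameter control through the amplification'' is misplaced because the damage is already done before amplification.

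The paper's proof takes the opposite route: it \emph{forces $k$ to be tiny}. It partitions the $m$ clauses into $K$ groups with $K\approx f^{-1}(m)$, and builds one vertex per assignment to a whole group that satisfies a $\lambda$-fraction of its clauses. The resulting graph has $N\le K\,2^{3m/K}$ vertices and the parameter is $k=(1-\varepsilon^2)K$. Now $f(k)\approx m$ by the choice of $K$, and $N^{o(k)}\le (K2^{3m/K})^{o(K)}=2^{o(m)}$, so a hypothetical $f(k)N^{o(k)}$ approximation algorithm genuinely runs in time $2^{o(m)}$ on these instances. The gap between YES and NO is obtained directly (via a threshold $\lambda$ and a counting claim), with no expander amplification needed. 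The missing idea in your proposal is precisely this trade-off: blow up $N$ exponentially in $m/K$ while shrinking $k$ to $\Theta(K)$, with $K$ tuned to the unknown $f$.
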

\begin{proof} Suppose that such an algorithm exists for some $\epsilon>0$.
W.l.o.g., we can assume that $f$ is increasing, and that $f(k)\geq
2^k$. Take an instance $I$ of \m3sat{}, let $K$ be an integer that
will be fixed later, and do the following:
Partition the $m$ clauses into $K$ groups $H_1,\cdots,H_K$ each of them containing, roughly, $m/K$ clauses each.
Each group $H_i$ involves a number $s_i\leq 3m/K$ of variables. For all possible values of these variables,
add a vertex in the graph $G_I$ if these values satisfy at least
$\lambda m/K$ clauses in $H_i$ (the value of $\lambda$ will also
be fixed later).
Finally, add an edge between two vertices if they have one contradicting variable.
In particular the vertices corresponding to the same group of
clauses form a clique. It is easy to see that the so-constructed
graph contains $N\leq K2^{3m/K}$ vertices.

The following easy claim holds.
\begin{claim}\label{followinglemma}
If a variable assignment satisfies at least $\lambda m/K$
clauses in at most~$s$ groups, then it satisfies at most
$\lambda m+s(1-\lambda)m/K$ clauses.
\end{claim}
\begin{proofof} Consider an assignment as
the one claimed in claim's statement. This assignment satisfies at
most~$m/K$ clauses in at most~$s$ groups, and at most~$\lambda m/K$
in the other $K-s$ groups, so in total at most $sm/K+(K-s)\lambda
m/K=\lambda m+s(1-\lambda)m/K$, that completes the proof of the claim.\end{proofof}

Now, let us go back to the proof of the theorem.
Assume an independent set of size at least $t$ in $G_I$. Then one can achieve a partial solution that satisfies at least $\lambda m/K$
clauses in at least $t$ groups. So, at least $t\lambda m/K$ clauses
are satisfiable. In other words, if at most $(r+\epsilon')m$ clauses
are satisfiable, then a maximum independent set in~$G_I$ has size at
most $K\frac{r+\epsilon'}{\lambda}$.
Suppose that at least $(1-\epsilon')m$ clauses are satisfiable. Then, using Lemma~\ref{followinglemma},
there exists a solution satisfying at least $\lambda m/K$
clauses in at least $\frac{1-\epsilon'-\lambda}{1-\lambda}K$
groups; otherwise, it should be $\lambda
m+s(1-\lambda)m/K<(1-\epsilon')m$. Then, there exists an
independent set of size $\frac{1-\epsilon'-\lambda}{1-\lambda}K$
in $G_I$.

Now, set $K=\lceil \phi(m)/(1-\epsilon^2)\rceil$ where $\phi$ is the
inverse function of $f$ (i.e., $\phi = f^{-1}$). Set also
$\lambda=1-\epsilon$, and $\epsilon'=\epsilon^3$. Run the assumed
$(r+\epsilon)$-approximation parameterized algorithm for \is{} in
$G_I$ with parameter $k=(1-\epsilon^2)K$. Then,
if at least $(1-\epsilon')m$ equations are satisfiable, there exists an independent set of size at least
$\frac{1-\epsilon'-\lambda}{1-\lambda}K=(1-\epsilon^3/\epsilon)K=(1-\epsilon^2)K=k$;
so, the algorithm must output an independent set of size at least
$(r+\epsilon)k$.
Otherwise, if at most $(r+\epsilon')$ equations are satisfiable, the size of an independent set is at most
$K\frac{r+\epsilon'}{\lambda}=K\frac{r+\epsilon^3}{1-\epsilon}=k\frac{r+\epsilon^3}{(1-\epsilon)(1-\epsilon^2)}=k(r+r\epsilon+o(\epsilon))$.

So, for $\epsilon$ sufficiently small, the algorithm allows to
distinguish between the two cases of \m3sat{} (for $\epsilon'$).


The running time of the yielded algorithm is $f(k)N^{o(k)}$, but $f(k)=f((1-\epsilon^2)K)=m$, and $N^{o(k)}=N^{k/\psi(k)}$ for some increasing and unbounded function~$\psi$, and $N^{o(k)}=
(K2^{3m/K})^{k/\psi(k)}=2^{o(m)}$.~\qed \end{proof}

Using Lemma~\ref{selfimp} together with Theorem~\ref{paraminapis}, the following result can be easily derived.
\begin{corollary}
Under Hypothesis~\ref{hyp0} and \textsc{ETH}, for any $r\in(0,1)$
there is no $r$-ap\-p\-ro\-x\-i\-ma\-ti\-on parameterized algorithm
for \is{}(i.e., an algorithm that runs in time $f(k)p(n)$ for some
function $f$ and some polynomial~$p$).
\end{corollary}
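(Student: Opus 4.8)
The plan is to combine Theorem~\ref{paraminapis} with the self-improvement property of Lemma~\ref{selfimp} in the contrapositive direction. Suppose, for contradiction, that there exists some $r_0\in(0,1)$ and an $r_0$-approximation parameterized algorithm for \is{}, i.e.\ one running in time $f(k)p(n)$ for some function $f$ and polynomial $p$. I would first invoke Lemma~\ref{selfimp}: the existence of an $r_0$-approximation parameterized algorithm for a single $r_0$ is equivalent to the existence of an $r$-approximation parameterized algorithm for \emph{every} $r\in(0,1)$. In particular, taking $r$ to be the constant from Hypothesis~\ref{hyp0}, we obtain an $(r+\epsilon)$-approximation parameterized algorithm for \is{} for every $\epsilon>0$ (choosing $\epsilon$ small enough that $r+\epsilon<1$).

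Next I would observe that a parameterized approximation algorithm running in time $f(k)p(n)$ certainly runs in time $f(k)n^{o(k)}$ (indeed $p(n)=n^{O(1)}=n^{o(k)}$ for $k$ growing), so it meets the hypothesis of Theorem~\ref{paraminapis}. But Theorem~\ref{paraminapis} asserts that, under Hypothesis~\ref{hyp0} and \textsc{ETH}, no parameterized approximation algorithm for \is{} running in time $f(k)N^{o(k)}$ can achieve ratio $r+\epsilon$ in graphs of order $N$. This directly contradicts what we have just derived from Lemma~\ref{selfimp}. Hence no such $r_0$ can exist, which is exactly the statement of the corollary.

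The only genuinely delicate point is to make sure the self-improvement lemma is applied in the right direction and that the resulting algorithm's running time is still of the form $f'(k)n^{O(1)}$ (or at least $f'(k)n^{o(k)}$) so that Theorem~\ref{paraminapis} applies; this is automatic because the self-improvement construction of Lemma~\ref{selfimp} (as in~\cite{escoffier-paschos-tourniaire}) preserves the FPT structure of the running time, only changing the function~$f$. Everything else is a one-line chaining of two already-proved results, so there is no substantial obstacle; the corollary is essentially immediate once Theorem~\ref{paraminapis} and Lemma~\ref{selfimp} are in hand.
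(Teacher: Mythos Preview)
Your proposal is correct and follows exactly the approach the paper indicates: the paper merely states that the corollary is ``easily derived'' by combining Lemma~\ref{selfimp} with Theorem~\ref{paraminapis}, and you have spelled out precisely that combination. The only detail you add beyond the paper is the (correct) observation that an FPT running time $f(k)p(n)$ is in particular $f(k)N^{o(k)}$, which is needed to invoke Theorem~\ref{paraminapis}.
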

Let us now deal with \ds{}  that is known to be
W[2]-hard~\cite{downeybook}. Existence of FPT-approximation algorithms for this problem is
an open question~\cite{dofemcciwpec}. Here, we present an
approximation preserving reduction (fitting the parameterized
framework) that works with the special set of instances produced
in the proof of Theorem~\ref{paraminapis}. This reduction will
allow us to obtain a lower bound (based on the same hypothesis)
for the approximation of \textsc{min dominating set} from
Theorem~\ref{paraminapis}.

Consider a graph $G(V,E)$ on $n$ vertices where $V$ is a set of
$K$ cliques $C_1,\cdots,C_K$. We build a graph $G'(V',E')$ such
that $G$ has an independent set of size $\alpha$ if and only if
$G'$ has a dominating set of size $2K-\alpha$. The graph~$G'$ is
built as follows.
For each clique $C_i$ in $G$, add a clique $C'_i$ of the same size in $G'$. Add also:
an independent set $S_i$ of size $3K$, each vertex in $S_i$ being adjacent to all vertices in $C'_i$ and a special vertex $t_i$ adjacent to all the vertices in $C'_i$.
For each edge $e=(u,v)$ with $u$ and $v$ {\it not} in the same clique in $G$, add an independent set~$W_e$ of size $3K$. Suppose that $u\in C_i$ and $v\in C_j$. Then, each vertex in $W_e$ is linked to $t_i$ and to all vertices in $C'_i$ but $u$ (and $t_j$ and all vertices in $C'_j$ but $v$).

Informally, the reduction works as follows. The set $S_i$ ensures
that we have to take at least one vertex in each $C'_i$, the fact
that $|W_e| = 3K$ ensures that it is never interesting to take a
vertex in $W_e$. If we take vertex $t_i$ in a dominating set, this
will mean that we do not take any vertex in the set $C_i$ in the
corresponding independent set in $G$. If we take one vertex in
$C'_i$ (but not $t_i$), this vertex will be in the independent set
in $G$. Let us state this property in the following lemma.
\begin{lemma}\label{lemmads}
    $G$ has an independent set of size $\alpha$ if and only if $G'$
    has a dominating set of size $2K-\alpha$.
\end{lemma}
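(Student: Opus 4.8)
The plan is to prove the two implications by direct constructions on the gadget graph, and the whole argument turns on two size facts: each padding set $S_i$ and each padding set $W_e$ has $3K$ vertices, which is strictly larger than the budget $2K-\alpha$ of any dominating set we consider. I would first record three elementary observations. A vertex of $C'_i$ has in its closed neighborhood $N[\cdot]$ all of $C'_i$, all of $S_i$ and $t_i$. The vertex $t_i$ has in its closed neighborhood all of $C'_i$ and all of the sets $W_e$ with $e$ incident to the $i$-th clique, but not $S_i$. And a vertex of $S_i$ is dominated only by itself or by a vertex of $C'_i$, while a vertex of $W_e$ with $e=(u,v)$, $u\in C_i$, $v\in C_j$, is dominated only by itself, by $t_i$, by $t_j$, or by a vertex of $(C'_i\setminus\{u\})\cup(C'_j\setminus\{v\})$.

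For the ``only if'' direction I would start from an independent set $I$ of $G$ of size $\alpha$; since each $C_i$ is a clique, $I$ contains at most one vertex of each $C_i$, hence $I$ meets exactly $\alpha$ cliques — call this index set $A$ — and for $i\in A$ let $v_i$ be the vertex of $I$ in $C_i$, regarded also as a vertex of $C'_i$. Then I would take
\[
D=\{v_i : i\in A\}\ \cup\ \{w_i : i\notin A\}\ \cup\ \{t_i : i\notin A\},
\]
with $w_i$ an arbitrary vertex of $C'_i$, so that $|D|=\alpha+2(K-\alpha)=2K-\alpha$. Checking domination is then routine from the observations: in every gadget the chosen $C'$-vertex dominates $C'_i$, $S_i$ and $t_i$; and for an edge $e=(u,v)$, $u\in C_i$, $v\in C_j$, the set $W_e$ is dominated by $t_i$ or $t_j$ whenever $i\notin A$ or $j\notin A$, while if $i,j\in A$ it is dominated by $v_i$ or $v_j$ — here one uses that $I$ is independent, so $(v_i,v_j)$ is not the edge $(u,v)$, hence $v_i\ne u$ or $v_j\ne v$.

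For the ``if'' direction I would take a dominating set $D_0$ of size $2K-\alpha$ and first run a clean-up step: using the third observation together with $|S_i|=|W_e|=3K>2K-\alpha$, one checks that $D_0$ must already meet every $C'_i$, and then every vertex of $D_0$ lying in some $S_i$ or $W_e$ can be deleted (replacing it, if needed, by nothing, since its closed neighborhood is already covered) without losing the dominating property; this yields a dominating set $D\subseteq\bigcup_i(C'_i\cup\{t_i\})$ with $|D|\le 2K-\alpha$ and $D\cap C'_i\ne\emptyset$ for all $i$. I would then set $A=\{i : t_i\notin D \text{ and } |D\cap C'_i|=1\}$ and, for $i\in A$, let $v_i$ be the unique vertex of $D\cap C'_i$, identified with a vertex of $C_i$. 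Since the sets $C'_i\cup\{t_i\}$ are pairwise disjoint, a gadget indexed by $i\in A$ contributes exactly one vertex to $D$ and one indexed by $i\notin A$ contributes at least two, so $2K-\alpha\ge|D|\ge|A|+2(K-|A|)=2K-|A|$, giving $|A|\ge\alpha$. Finally, $\{v_i : i\in A\}$ is independent in $G$: if $(v_i,v_j)$ were an edge with $i,j\in A$, the corresponding $W_e$ with $e=(v_i,v_j)$ would be undominated, since $t_i,t_j\notin D$ and $D\cap C'_i=\{v_i\}$, $D\cap C'_j=\{v_j\}$ leave no vertex of $N[W_e]$ in $D$. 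Any $\alpha$ of the $v_i$'s then form the required independent set.

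The main obstacle will be the ``if'' direction, specifically the clean-up step and the precise definition of $A$. One must argue that a small dominating set may be assumed to avoid the padding sets entirely, and — more delicately — that $A$ has to be restricted to indices at which $D$ meets $C'_i$ in a single vertex and misses $t_i$: if $|D\cap C'_i|\ge 2$, the representative of clique $i$ is not forced and the independence step for the $v_i$ would fail. Both points rely on the padding sets having size $3K>2K-\alpha$, which is precisely why the construction uses $3K$; the rest is bookkeeping.
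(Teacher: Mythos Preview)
Your proof is correct and follows essentially the same approach as the paper's. The forward direction is identical; in the backward direction the paper performs one extra normalisation step (if $D$ contains two vertices of some $C'_i$ it replaces one of them by $t_i$, so that every $C'_i$ contributes exactly one vertex) and then reads off $\alpha$ indices with $t_i\notin D$, whereas you skip this replacement and instead define $A$ directly and use the counting inequality $|D|\ge |A|+2(K-|A|)$ to get $|A|\ge\alpha$ --- the two arguments are equivalent.
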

\begin{proof}
 Suppose that $G$ has
an independent set $S$ of size $\alpha$. Then, $S$ has one vertex in
$\alpha$ sets $C_i$, and no vertex in the other $K-\alpha$ sets.  We
build a dominating set $T$ in $G'$ as follows: for each vertex in
$S$ we take its copy in $G'$. For each clique $C_i$ without vertices
in $S$, we take $t_i$ and one (anyone) vertex in $C'_i$. The
dominating set~$T$ has size $\alpha+2(K-\alpha)=2K-\alpha$. For each
$C'_i$ there exists a vertex in $T$; so, vertices in $C'_i$, $t_i$
and vertices in $S_i$ are dominated. Now take a vertex in $W_e$ with
$e=(u,v)$, $u\in C_i$ and $v\in C_j$. If $C_i\cap S=\emptyset$ (or
$C_j\cap S=\emptyset$), then $t_i\in T$ (or $t_j\in T$) and, by
construction, $t_i$ is adjacent to all vertices in $W_e$. Otherwise,
there exist $w\in S\cap C_i$ and $x\in S\cap C_j$. Since $S$ is an
independent set, either $w\neq u$ or $x\neq v$. If $w\neq u$, by
construction $w$ (its copy in $C'_i$) is adjacent to all vertices in
$W_e$ and, similarly, for $x$ if $x\neq v$. So, $T$ is a dominating
set.

Conversely, suppose that $T$ is a dominating set of size
$2K-\alpha$. Since $S_i$ is an independent set of size $3K$, we can
assume that $T\cap S_i=\emptyset$ and the same occurs with $W_e$. In
particular, there exists at least one vertex in $T$ in each $C_i$.
Now, suppose that $T$ has two different vertices~$u$ and~$v$ in the
same $C_i$. Then we can replace $v$ by $t_i$ getting a dominating
set (vertices in $S_i$ are still dominated by $u$, and any vertex in
some $W_e$ which is adjacent to $v$ is adjacent to $t_i$). So, we
can assume that $T$ has the following form: exactly one vertex in
each $C_i$, and $K-\alpha$ vertices~$t_i$. Hence, there are $\alpha$
$C'_i$ cliques where $t_i$ is not in $T$. We consider in $G$ the set
$S$ constituted by the $\alpha$ vertices in $T$ in these $\alpha$
sets. Take two vertices $u,v$ in $S$ with, say, $u\in C'_i$ and
$v\in C'_j$ (with $t_i\not \in T$ and $t_j\not\in T$). If there were
an edge $e=(u,v)$ in $G$, neither $u$ nor $v$ would have dominated a
vertex in $W_e$ (by construction). Since neither $t_i$ nor $t_j$ is
in $T$, this set would not have been a dominating set, a
contradiction. So $S$ is an independent set.~\qed
\end{proof}

\begin{theorem}\label{paraminapds}
Under Hypothesis~\ref{hyp0} and \textsc{ETH}, for every
$\epsilon>0$, no approximation algorithm running in time
$f(k)N^{o(k)}$ can achieve approximation ratio smaller than
$2-r-\epsilon$ for \ds{} in graphs of order~$N$.
\end{theorem}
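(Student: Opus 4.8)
The plan is to compose the reduction of Lemma~\ref{lemmads} with the construction from the proof of Theorem~\ref{paraminapis}. The graphs $G_I$ built there have vertex set partitioned into $K$ cliques, which is precisely the shape required by Lemma~\ref{lemmads}; let $G'_I$ be the graph it produces from $G=G_I$, so that its domination number equals $2K$ minus the independence number of $G_I$. Recall from the proof of Theorem~\ref{paraminapis} that, with the choices $\lambda=1-\epsilon$ and $\epsilon'=\epsilon^3$, the independence number of $G_I$ is at least $(1-\epsilon^2)K$ when at least $(1-\epsilon')m$ clauses of the \m3sat{} instance are satisfiable, and at most $K(r+\epsilon')/\lambda$ when at most $(r+\epsilon')m$ are. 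Translating through Lemma~\ref{lemmads}, the domination number of $G'_I$ is at most $(1+\epsilon^2)K$ in the first case and at least $K\bigl(2-(r+\epsilon')/\lambda\bigr)$ in the second.

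The first thing I would check is that the ratio $\bigl(2-(r+\epsilon')/\lambda\bigr)/(1+\epsilon^2)$ of these two bounds tends to $2-r$ as $\epsilon\to 0$, and in fact exceeds $2-r-\epsilon$ for all sufficiently small $\epsilon>0$; this is where the inequality $r<1$ enters, since the first-order term of the difference is $(1-r)\epsilon>0$. Assuming a \ds{} algorithm running in time $f(k)N^{o(k)}$ with ratio $\rho<2-r-\epsilon$, I would run it on $G'_I$ with parameter $k'=\lceil(1+\epsilon^2)K\rceil$. In the satisfiable case $G'_I$ has a dominating set of size at most $k'$, so the algorithm cannot answer ``no'' and must output one of size at most $\rho k'<(2-r-\epsilon)k'$; in the other case every dominating set of $G'_I$ has size at least $K\bigl(2-(r+\epsilon')/\lambda\bigr)>(2-r-\epsilon)k'>\rho k'$, so the algorithm must answer ``no''. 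Hence the algorithm distinguishes the two cases of \m3sat{}.

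It then remains to bound the running time exactly as in Theorem~\ref{paraminapis}. The vertices added by Lemma~\ref{lemmads} (the sets $S_i$, the vertices $t_i$ and the sets $W_e$) number only $O(K\,|V(G_I)|^2)$, so $|V(G'_I)|$ is polynomial in $|V(G_I)|\le K2^{3m/K}$ and hence $2^{o(m)}$ once $K$ is taken of order $\phi(m)=f^{-1}(m)$; choosing $K$ so that $k'$ has order $\phi(m)$ keeps $f(k')$ of order $m$, and, since $k'=\Theta(K)$, the factor $|V(G'_I)|^{o(k')}$ is $2^{o(m)}$ by the same cancellation $\tfrac{k'}{\psi(k')}\cdot\Theta(m/K)=\Theta(m)/\psi(k')=o(m)$ used there. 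So the whole procedure runs in $2^{o(m)}$, contradicting Hypothesis~\ref{hyp0} under \textsc{ETH}. The only genuinely new ingredient over the proof of Theorem~\ref{paraminapis} is the gap computation above --- that the factor $2-r$ survives the $(1+\epsilon^2)$ loss incurred in the satisfiable case --- and calibrating $K$ and $k'$ so that the gap, the bound on $f(k')$, and the subexponential bound on $|V(G'_I)|^{o(k')}$ hold simultaneously is the step that needs the most care.
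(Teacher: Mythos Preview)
Your proposal is correct and follows essentially the same approach as the paper: compose Lemma~\ref{lemmads} with the $K$-clique graph~$G_I$ from the proof of Theorem~\ref{paraminapis}, translate the independent-set gap into a dominating-set gap $(1+O(\epsilon))K$ versus $(2-r-O(\epsilon))K$, and check that the blow-up $|V(G'_I)|=\mathrm{poly}(|V(G_I)|)$ keeps the running time at $2^{o(m)}$. The only difference is presentational: the paper reduces back to the \is{} instance and invokes Theorem~\ref{paraminapis} as a black box (writing $f(k')N'^{o(k')}=g(k)N^{o(k)}$), whereas you go directly to \m3sat{} and recalibrate~$K$ to the \ds{} algorithm's own~$f$; your route is slightly more self-contained, and your explicit check that the first-order term $(1-r)\epsilon$ is positive is a detail the paper leaves under an $O(\epsilon)$.
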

\begin{proof}
In the proof of Theorem~\ref{paraminapis}, we produce a graph~$G_I$ which is made of~$K$ cliques and such that: if at least $(1-\epsilon)m$ clauses are satisfiable in~$I$, then there exists an independent set of size $(1-O(\epsilon))K$; otherwise (at most $(r+\epsilon)m$ clauses are satisfiable in~$I$), the maximum independent set has size at most $(r+O(\epsilon))K$. The previous reduction transforms~$G_I$ in a graph~$G'_I$ such that, applying Lemma~\ref{lemmads}, in the first case there exists a dominating set of size at most $2K-(1-O(\epsilon))K=K(1+O(\epsilon))$ while, in the second case, the size of a dominating set is at least $2K-(r+O(\epsilon))K=K(2-r-O(\epsilon))$. Thus, we get a gap with parameter $k'=K(1+O(\epsilon))$. Note that the number of vertices in~$G'_I$ is $N'=N+K+3K+3K|E_I|=O(N^3)$ (where~$E_I$ is the set of edges in~$G_I$). If we were able to distinguish between these two sets of instances in time $f(k')N'^{o(k')}$, this would allow to distinguish the corresponding independent set instances in time $f(k')N'^{o(k')}=g(k)N^{o(k)}$ since $k'=K(1+O(\epsilon))=k(1+O(\epsilon))$ ($k=K(1-\epsilon^3)$ being the parameter chosen for the graph~$G_I$).~\qed \end{proof}


Such a lower bound immediately transfers to \textsc{Set Cover}
since a graph on $n$ vertices for \ds{} can be easily transformed
into an equivalent instance of \textsc{Set Cover} with ground
set and set system both of size $n$.
\begin{corollary}\label{paraminapsc}
Under Hypothesis~\ref{hyp0} and \textsc{ETH}, for every
$\epsilon>0$, no approximation algorithm running in time
$f(k)m^{o(k)}$ can achieve approximation ratio smaller than
$2-r-\epsilon$ for \textsc{Set Cover} in instances with~$m$ sets.
\end{corollary}

\section{On the approximability of \is{} and related problems in subexponential
time}\label{secsubexpo}



As mentioned in Section~\ref{sec:pre}, an almost-linear size {\textsc{PCP}}
construction~\cite{DBLP:conf/focs/MoshkovitzR08} for \3sat allows to get the negative results stated
in Corollaries~\ref{2-query-cor1} and~\ref{2-query-cor2}. In this
section, we present further consequences of Theorem~\ref{2-query},
based upon a combination of known reductions with (almost)
linear size amplifications of the instance.

First, Theorem~\ref{2-query} combined with the reduction
in~\cite{motwa} showing inapproximability results for \is{} in
polynomial time, leads to the following result.
\begin{theorem}\label{subexpis}
Under \textsc{ETH}, for any $r>0$ and any $\delta>0$, there is no
$r$-approximation algorithm for \is{} running in time
$O(2^{N^{1-\delta}})$, where~$N$ is the size of the input graph for
\is{}.
\end{theorem}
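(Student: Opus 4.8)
The plan is to compose the almost-linear-size PCP of Theorem~\ref{2-query} with the classical PCP-to-\is{} reduction of~\cite{motwa} (the FGLSS graph), so that the near-linearity of the instance size is preserved, and then to exploit the resulting constant gap against \textsc{ETH}.

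Fix the target ratio $r$, which we may assume to lie in $(0,1)$, and pick a constant $\epsilon$ with $0<\epsilon<r$. By \textsc{ETH} and the sparsification lemma (Lemma~\ref{sparsificationlemma}) it suffices to rule out, for formulas $\phi$ whose size $s$ is linear in the number of variables, an algorithm deciding \3sat{} in time $2^{o(s)}$. Apply Theorem~\ref{2-query} with this $\epsilon$: $\phi$ admits a PCP verifier using $R=(1+o(1))\log s+O(\log(1/\epsilon))$ random bits and reading $q=O(\log(1/\epsilon))$ proof bits, with perfect completeness and soundness $\epsilon$. From this verifier build the FGLSS graph $G$: its vertices are the pairs $(\rho,a)$, where $\rho$ ranges over the $2^R$ random strings and $a$ is an assignment of the $q$ bits queried on $\rho$ that makes the verifier accept, and two vertices are adjacent when they share the same $\rho$ or assign conflicting values to some common proof position. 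If $\phi$ is satisfiable, a proof accepted with probability $1$ yields an independent set of size $2^R$ (one accepting configuration per random string, all mutually consistent); if $\phi$ is unsatisfiable, any independent set $I$ contains at most one vertex per random string and assigns consistent values to the proof bits, hence extends to a proof accepted on at least $|I|$ of the $2^R$ random strings, so $|I|\leq\epsilon 2^R$ by soundness. Since $\epsilon$ is a constant, $G$ has $N\leq 2^{R+q}=s^{1+o(1)}$ vertices and is built in time $\mathrm{poly}(s)$. (If one prefers to start the reduction from an arbitrary fixed constant gap, the linear-size gap amplification of Theorem~\ref{theoappendix} brings it below $r$ at the cost of a constant factor in $N$, hence still $N=s^{1+o(1)}$.)

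Now assume, for contradiction, that for some $\delta>0$ there is an $r$-approximation algorithm $\mathcal A$ for \is{} running in time $O(2^{N^{1-\delta}})$. Running $\mathcal A$ on $G$ returns an independent set of size at least $r\cdot 2^R>\epsilon 2^R$ when $\phi$ is satisfiable and of size at most $\epsilon 2^R$ when $\phi$ is unsatisfiable, so comparing the size of the output with $\epsilon 2^R$ decides satisfiability of $\phi$ in time $\mathrm{poly}(s)+O(2^{N^{1-\delta}})$.

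It remains to see this is $2^{o(s)}$. Write $N=s^{1+\gamma(s)}$ with $\gamma(s)\to 0$; for $s$ large enough $\gamma(s)<\delta/2$, whence $N^{1-\delta}=s^{(1+\gamma(s))(1-\delta)}\leq s^{1-\delta/2}$ and the whole procedure runs in time $2^{o(s)}$, contradicting \textsc{ETH}. The only genuinely delicate point is this last estimate: it is precisely the near-linearity --- as opposed to exact linearity --- of the PCP size that makes the bookkeeping $(1+o(1))(1-\delta)<1$ go through, and hence that turns an $O(2^{N^{1-\delta}})$-time \is{}-approximation, for \emph{every} $\delta>0$, into a subexponential-time algorithm for \3sat{}; the rest is the routine FGLSS analysis.
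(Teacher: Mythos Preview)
Your proof is correct and follows essentially the same route as the paper: sparsification to get linear-size \3sat{} instances, Theorem~\ref{2-query} to obtain an almost-linear PCP with constant soundness below the target ratio, the FGLSS graph to turn the completeness/soundness gap into an \is{} gap on $N=s^{1+o(1)}$ vertices, and finally the observation that $(1+o(1))(1-\delta)<1$ so that a $2^{N^{1-\delta}}$-time approximation yields a $2^{o(s)}$-time decision procedure. Your bookkeeping is in fact slightly more explicit than the paper's; the parenthetical appeal to Theorem~\ref{theoappendix} is unnecessary here (the paper reserves it for the \textsc{LPC} setting) but harmless.
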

\begin{proof}
{Given an $\epsilon>0$, let $\epsilon'$ such that
$0<\epsilon'<\epsilon$.
 Given an instance~$\phi$ of
\textsc{3 sat} on $n$ variables, we first apply the sparsification
lemma (with $\epsilon'$) to get $2^{\epsilon' n}$ instances $\phi_i$
on at most $n$ variables. Since each variable appears at most
$c_{\epsilon'}$ times in $\phi_i$, the global size of $\phi_i$ is
$|\phi_i|\leq c_{\epsilon'}n$.}

{Consider a particular $\phi_i$, $r>0$ and $\delta>0$. We use the fact that \3sat$\in
{\textsc{PCP}}_{1,r}[(1+o(1))\log |\phi| +D_r,E_r]$ (where~$D_r$ and~$E_r$ are constants that depend only on~$r$), in order to build the following graph~$G_{\phi_i}$ (see also~\cite{motwa}).
For any random string~$R$, and any possible value of the~$E_r$ bits read by~V, add a vertex in the graph if~V accepts.
If two vertices are such that they have at least one contradicting bit (they read the same bit which is~1 for one of them and~0 for the other one), add an edge between them.
In particular, the set of vertices corresponding to the same random
string is a clique.}

{Assume that~$\phi_i$ is satisfiable. Then there
exists a proof for which the verifier accepts for any random string~$R$. Take for each random string~$R$ the vertex in~$G_{\phi_i}$ corresponding to this proof. There is no conflict (no edge) between any of these~$2^{|R|}$ vertices, hence $\alpha(G_{\phi_i})=2^{|R|}$ (where, in a graph~$G$,~$\alpha(G)$ denotes the size of a maximum independent set).}

{If $\phi_i$ is not satisfiable, then
$\alpha(G_{\phi_i}) \leq r2^{|R|}$. Indeed, suppose that there is an
independent set of size $\alpha>r2^{|R|}$. This independent set
corresponds to a set of bits with no conflict, defining part of a
proof that we can arbitrarily extend to a proof $\Pi$. The
independent set has $\alpha$ vertices corresponding to $\alpha$
random strings (for which~V accepts), meaning that the probability
of acceptance for this proof~$\Pi$ is at least $\alpha/2^{|R|}>r$, a
contradiction with the property of the verifier.}

{Furthermore, $G_{\phi_i}$ has $N \leq
2^{|R|}2^{E_r}\leq C'|\phi_i|^{1+o(1)}=Cn^{1+o(1)}$ vertices (for
some constants $C,C'$ that depend on $\epsilon'$) since
$|\phi_i|\leq c_{\epsilon'}n$. Then, one can see that, for any
$r'>r$, an $r'$-approximation algorithm for \is{} running in time
$O(2^{N^{1-\delta}})$ would allow to decide whether $\phi_i$ is
satisfiable or not in time $O(2^{n^{1-\delta'}})$ for some
$\delta'<\delta$. Doing this for each of the formula $\phi_i$ would
allow to decide whether $\phi$ is satisfiable or not in time
$p(n)2^{\epsilon'n}+2^{\epsilon'
n}O(2^{n^{1-\delta'}})=O(2^{\epsilon n})$ (where $p$ is a
polynomial). This is valid for any $\epsilon>0$ so it would
contradicting \textsc{ETH}.~\qed}
\end{proof}

Since (for $k\leq N$),
$N^{k^{1-\delta}}=O(2^{N^{1-\delta'}})$, for some $\delta'<\delta$,
the following result also holds.
\begin{corollary}\label{subexpisbis}
Under \textsc{ETH}, for any $r>0$ and any $\delta>0$, there is no
$r$-approximation algorithm for \is{} (parameterized by $k$)
running in time $O(N^{k^{1-\delta}})$, where~$N$ is the size of
the input graph.
\end{corollary}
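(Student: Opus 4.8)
The plan is to read Corollary~\ref{subexpisbis} off from Theorem~\ref{subexpis} (more precisely, from its proof), the only new ingredient being a change of the running-time bound based on the elementary estimate
\[
N^{k^{1-\delta}}\;\le\;N^{N^{1-\delta}}\;=\;2^{N^{1-\delta}\log N}\;=\;O\!\left(2^{N^{1-\delta'}}\right),
\]
valid for every $k\le N$, every $\delta'<\delta$, and every sufficiently large $N$. So I would start by supposing, for contradiction, that for some $r\in(0,1)$ and some $\delta>0$ there is a parameterized $r$-approximation algorithm $\mathcal{A}$ for \is{} running in time $O(N^{k^{1-\delta}})$.

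First I would reproduce the construction from the proof of Theorem~\ref{subexpis}: starting from a \3sat{} formula $\phi$ on $n$ variables, apply the sparsification lemma with a small $\epsilon'$ to obtain $2^{\epsilon' n}$ formulas $\phi_i$ of size $|\phi_i|\le c_{\epsilon'}n$, and for each $\phi_i$ build, via Theorem~\ref{2-query} with soundness $s$ for some fixed constant $s<r$, the graph $G_{\phi_i}$ on $N\le Cn^{1+o(1)}$ vertices with $\alpha(G_{\phi_i})=2^{|R|}$ when $\phi_i$ is satisfiable and $\alpha(G_{\phi_i})\le s\,2^{|R|}$ otherwise, where $2^{|R|}=n^{1+o(1)}\le N$. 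The key observation is that a single call of $\mathcal{A}$ on $(G_{\phi_i},k)$ with parameter $k:=2^{|R|}$ already decides satisfiability of $\phi_i$: if $\phi_i$ is satisfiable then $\alpha(G_{\phi_i})=k$, so $\mathcal{A}$ cannot legitimately answer ``no'' and must return an independent set of size $\ge rk>s\,2^{|R|}$; if $\phi_i$ is unsatisfiable then $\alpha(G_{\phi_i})\le s\,2^{|R|}<k$, so no independent set of size $\ge rk$ exists at all and the only admissible output is ``no''. Thus $\mathcal{A}$ outputs ``no'' if and only if $\phi_i$ is unsatisfiable.

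Then I would add up running times exactly as in Theorem~\ref{subexpis}: one call to $\mathcal{A}$ costs $O(N^{k^{1-\delta}})=O(N^{N^{1-\delta}})$ since $k\le N$, and $N^{N^{1-\delta}}=2^{N^{1-\delta}\log N}=O(2^{n^{1-\delta'}})$ for any $\delta'<\delta$, because $N=Cn^{1+o(1)}$ absorbs both the logarithmic factor and the $o(1)$ in the exponent; running this for all $2^{\epsilon' n}$ formulas $\phi_i$ decides \3sat{} in time $p(n)2^{\epsilon' n}+2^{\epsilon' n}O(2^{n^{1-\delta'}})=O(2^{\epsilon n})$ for every $\epsilon>0$, contradicting \textsc{ETH}.

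Since Theorem~\ref{subexpis} does all the real work, I do not expect a genuinely hard step; the one point needing care — and essentially the only place a naive argument could slip — is the passage from the \emph{parameterized} approximation guarantee to an actual decision procedure. It works precisely because in the completeness case $\alpha(G_{\phi_i})$ equals the threshold $k$ on the nose, so that neither a ``no'' answer nor a too-small certificate is permitted. If one would rather invoke Theorem~\ref{subexpis} purely as a black box, the alternative is to first turn $\mathcal{A}$ into a classical $r$-approximation algorithm by calling it for $k=N,N-1,\dots,1$ and returning the independent set obtained for the largest $k$ whose answer is not ``no''; this costs only an extra factor $N$, hence still runs in time $O(2^{N^{1-\delta'}})$, and Theorem~\ref{subexpis} then applies verbatim.
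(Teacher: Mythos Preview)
Your proposal is correct and follows essentially the same approach as the paper: the paper's entire argument is the single estimate $N^{k^{1-\delta}}=O(2^{N^{1-\delta'}})$ for $k\le N$ and some $\delta'<\delta$, after which Theorem~\ref{subexpis} is invoked directly. You supply more detail than the paper does---in particular, the paper does not spell out how to go from a \emph{parameterized} approximation algorithm to the decision procedure (your ``one point needing care''), whereas you give both the direct argument with $k=2^{|R|}$ and the black-box alternative of iterating over $k$; either fills the small gap the paper leaves implicit.
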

The results of Theorem~\ref{subexpis} and Corollary~\ref{subexpisbis} can be
immediately extended to problems that are linked to \is{} by
approximability preserving reductions (that preserve at least
constant ratios) and have linear amplifications of the sizes of the
instances.

For instance, this is the case of \msp{}
(preservation of constant ratios and of ratios functions of the
input size with amplification that is the identity function). This holds for the \cb{} problem where, given a
graph~$G(V,E)$, the goal is to find a maximum-size subset $V'
\subseteq V$ such that the graph~$G[V']$ is a bipartite graph.
%
%
\begin{proposition}\label{th:cb}
Under \textsc{ETH}, for any $r>0$ and any $\delta>0$, there is no
$r$-approximation algorithm for either \msp{} or
\cb{} running in time $O(2^{n^{1-\delta}})$ in a graph of order~$n$.
\end{proposition}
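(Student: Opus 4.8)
The plan is to reduce from \is{} itself, exploiting the fact that Theorem~\ref{subexpis} already rules out subexponential-time approximation for \is{}, and then invoke the classical approximation-preserving reductions from \is{} to \msp{} and to \cb{}. The key point is that both of these reductions increase the instance size only linearly (indeed, the identity or a small linear factor), so a subexponential running time of the form $O(2^{n^{1-\delta}})$ is preserved, and constant approximation ratios are preserved as well.

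First I would handle \msp{}. Given a graph $G(V,E)$ on $n$ vertices, the standard reduction creates a set system with ground set $E$ and one set $S_v = \{e \in E : v \in e\}$ for each vertex $v$; two sets $S_u, S_v$ are disjoint precisely when $uv \notin E$, so a packing of pairwise disjoint sets corresponds exactly to an independent set in $G$, with the same cardinality. Hence an $r$-approximation for \msp{} on this instance (which has $n$ sets) yields an $r$-approximation for \is{} on $G$. If such an algorithm ran in time $O(2^{n^{1-\delta}})$, combined with the above it would give an $r$-approximation for \is{} in time $O(2^{n^{1-\delta}})$, contradicting Theorem~\ref{subexpis} under \textsc{ETH}.

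Next, for \cb{}, I would use the reduction that, given $G$ on $n$ vertices, produces a graph $G'$ on $2n$ vertices consisting of two disjoint copies $G_1, G_2$ of $G$ together with a perfect matching joining each vertex to its copy (a standard construction: a set $V' \subseteq V(G')$ induces a bipartite subgraph iff it picks, from each matched pair plus the corresponding edges, a structure forcing an independent set in $G$; more precisely one argues that a maximum bipartite induced subgraph of $G'$ has size $n + \alpha(G)$, so that an $r$-approximation for \cb{} on $G'$ translates into a $(1-(1-r)/2 \cdot \text{something})$-type ratio for \is{} — one must check the precise ratio transfer, but since $\alpha(G) \ge n/r(n)$ is the interesting regime, a constant-ratio approximation transfers to a constant-ratio approximation). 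Because $|V(G')| = 2n$, the running time $O(2^{n^{1-\delta}})$ is again preserved up to adjusting $\delta$, so Theorem~\ref{subexpis} is contradicted.

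The main obstacle I expect is getting the ratio bookkeeping for \cb{} exactly right: unlike \msp{}, where the optimal values coincide, the \cb{} reduction has $\mathrm{opt}_{\cb}(G') = n + \alpha(G)$, so a multiplicative $r$-approximation for \cb{} only gives an \emph{additive-shifted} guarantee for \is{}, and one needs to argue that in the relevant regime (where $\alpha(G)$ is not too small, which can be arranged since otherwise \is{} is trivially approximable by brute force over small sets as noted in the excerpt) this still yields a constant-factor \is{} approximation contradicting Theorem~\ref{subexpis}. Once that is settled, the size and time bookkeeping is routine: a linear blow-up in instance size turns $O(2^{n^{1-\delta}})$ into $O(2^{(2n)^{1-\delta}}) = O(2^{n^{1-\delta''}})$ for a slightly smaller $\delta''>0$, which is what Theorem~\ref{subexpis} forbids.
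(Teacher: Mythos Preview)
Your treatment of \msp{} is correct and matches the paper: the standard edge-incidence set system has exactly $n$ sets and preserves both the optimum and the ratio exactly.

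For \cb{}, however, there is a genuine gap. The reduction you describe (two disjoint copies of~$G$ joined by a perfect matching) does \emph{not} satisfy $\mathrm{opt}_{\cb}(G')=n+\alpha(G)$. Take $G=K_4$: any three vertices in one copy already induce a triangle, so an induced bipartite subgraph of~$G'$ can pick at most two vertices from each copy, giving $\mathrm{opt}_{\cb}(G')=4$, not $n+\alpha(G)=5$. More generally your construction ties $\mathrm{opt}_{\cb}(G')$ to the largest induced bipartite subgraph of~$G$ itself, not to~$\alpha(G)$. Even if an additive relation of the shape $n+\alpha(G)$ held, you yourself note that the ratio transfer becomes ``additive-shifted''; since in the gap instances coming from Theorem~\ref{subexpis} one needs~$\alpha(G)$ to be an arbitrarily small constant fraction of~$N$, an additive loss of~$n$ would wipe out the gap entirely, and the brute-force-on-small-sets escape hatch does not help here.

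The paper avoids all of this by using a different linear-size reduction (due to Simon~\cite{si}): take two copies $G_1,G_2$ of~$G$ and add the cross-edge $v_{i_1}v_{j_2}$ whenever $i=j$ \emph{or} $(v_i,v_j)\in E$. Then an independent set~$S$ in~$G$ yields the bipartite induced subgraph $S_1\cup S_2$ of size~$2|S|$, and conversely the larger colour class of any induced bipartite subgraph of size~$t$ in~$G'$ projects to an independent set of size at least~$t/2$ in~$G$ (it cannot contain both copies of the same vertex, and the extra cross-edges force the projection to be independent). Hence $\mathrm{opt}_{\cb}(G')=2\alpha(G)$ and an $r$-approximation for \cb{} gives an $r$-approximation for \is{} with no additive shift, so Theorem~\ref{subexpis} applies directly.
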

\begin{proof}
Consider the following reduction from \is{} to \cb{}~(\cite{si}).
Let~$G(V,E)$ be an instance of~\is{} of order~$n$. Construct a
graph~$G'(V',E')$ for \cb{} by taking two distinct copies of~$G$
(denote them by~$G_1$ and~$G_2$, respectively) and adding the
following edges: a vertex~$v_{i_1}$ of copy~$G_1$ is linked with a
vertex~$v_{j_2}$ of~$G_2$, if and only if either $i=j$ or $(v_i,v_j)
\in E$. $G'$ has $2n$ vertices. Let now~$S$ be an independent set
of~$G$. Then, obviously, taking the two copies of $S$ in $G_1$ and
$G_2$ induces a bipartite graph of size $2|S|$. Conversely, consider
an induced bipartite graph in $G'$ of size $t$, and take the largest
among the two color classes. By construction it corresponds to an
independent set in $G$, whose size is at least $t/2$ (note that it
cannot contain 2 copies of the same vertex). So, any $r$-approximate
solution for \cb{} in~$G'$ can be transformed into an
$r$-approximate solution for \is{} in~$G$. Observe finally that the
size of $G'$ is two times the size of $G$. \qed
\end{proof}

Dealing with minimization problems, Theorem~\ref{subexpis}
and Corollary~\ref{subexpisbis} can be extended to \mcolo{}, thanks to the
reduction given in~\cite{lund}.

Given a graph~$G$ whose vertex set is
partitioned into~$K$ cliques each of size~$S$, and given a prime
number $q>S$, a graph~$H_q$ having the following properties can be
built in polynomial time:
(i)~the vertex set of $H_q$ is partitioned into $q^2K$ cliques, each of size~$q^3$;
(ii)~$\alpha(H_q)\leq \max\{q^2\alpha(G);q^2(\alpha(G)-1)+K;qK\}$;
(iii)~if $\alpha(G)=K$, then $\chi(H_q)=q^3$.

Note that this reduction uses the particular structure of graphs
produced in the inapproximability result in~\cite{motwa} (as in
Theorem~\ref{subexpis}). Then, we deduce the following result.
\begin{proposition}\label{th:colo}
Under \textsc{ETH}, for any $r>1$ and any $\delta>0$, there is no
$r$-approximation algorithm for \mcolo{} running in time
$O(2^{n^{1-\delta}})$ in a graph of order~$n$.
\end{proposition}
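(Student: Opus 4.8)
The plan is to chain together the reduction machinery from \cite{lund} with the subexponential hardness of \is{} already established in Theorem~\ref{subexpis}, while being careful to track the size blow-up so that it stays compatible with the $O(2^{n^{1-\delta}})$ regime. Concretely, I would start from the graph $G = G_{\phi_i}$ produced in the proof of Theorem~\ref{subexpis}: it has $N = n^{1+o(1)}$ vertices, its vertex set is partitioned into $K = 2^{|R|}$ cliques each of size $S = 2^{E_r}$ (a constant depending only on the target ratio), and it satisfies the gap $\alpha(G) = K$ in the YES case versus $\alpha(G) \le rK$ in the NO case. I would fix a target color ratio $r>1$ and choose the prime $q$ in the \cite{lund} reduction to be a constant (depending on $r$) large enough that the ratio $q^3$ versus the ``bad'' colorability forced by property (ii) exceeds $r$; since $S$ is constant we may certainly take $q$ constant with $q > S$, invoking Bertrand's postulate to get a prime.

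Next I would spell out the gap in $H_q$. In the YES case $\alpha(G)=K$, so by property (iii) $\chi(H_q) = q^3$. In the NO case $\alpha(G) \le rK$, so property (ii) gives $\alpha(H_q) \le \max\{q^2 rK,\ q^2(rK-1)+K,\ qK\}$, which for $r<1$ and $q$ large is dominated by a term of the form $c\, q^2 K$ with $c<1$; since $H_q$ has $q^2 K$ cliques of size $q^3$, hence $q^5 K$ vertices, a proper coloring needs at least (number of vertices)$/\alpha(H_q) \ge q^5 K / (c q^2 K) = q^3/c$ colors, so $\chi(H_q) \ge q^3/c$. Choosing $q$ so that $1/c > r$ — equivalently choosing $q$ large compared to how close $r$ is to $1$ — yields a multiplicative coloring gap of ratio at least $r$ between the two cases. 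An $r$-approximation algorithm for \mcolo{} on $H_q$ would therefore distinguish the two cases.

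For the running time: $H_q$ has $|V(H_q)| = q^5 K = q^5 2^{|R|} \le q^5 \cdot C' |\phi_i|^{1+o(1)} = C'' n^{1+o(1)}$ vertices, since $q$ is constant and $2^{|R|} = n^{1+o(1)}$. Thus running the hypothetical $r$-approximation algorithm in time $O(2^{|V(H_q)|^{1-\delta}})$ costs $O(2^{n^{1-\delta'}})$ for some $\delta'<\delta$, exactly as in the proof of Theorem~\ref{subexpis}. Then I would replay the sparsification-lemma wrapper verbatim from that proof: apply Lemma~\ref{sparsificationlemma} with a small $\epsilon'$ to reduce an arbitrary \3sat{} instance $\phi$ on $n$ variables to $2^{\epsilon' n}$ bounded-occurrence instances $\phi_i$ each of size $O(n)$, run the above on each $\phi_i$, and conclude that \3sat{} is decided in time $p(n)2^{\epsilon' n} + 2^{\epsilon' n} O(2^{n^{1-\delta'}}) = O(2^{\epsilon n})$, contradicting \textsc{ETH} since $\epsilon>0$ was arbitrary.

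The main obstacle I anticipate is purely bookkeeping: verifying that the graph $G_{\phi_i}$ from Theorem~\ref{subexpis} genuinely has the ``partitioned into $K$ cliques of equal size $S$'' structure that the \cite{lund} reduction demands — in particular that every random string $R$ contributes exactly the same number $2^{E_r}$ of accepting configurations, or else padding each clique up to a common size $S$ with dummy vertices without disturbing the independence-number gap. One must also confirm that such padding, together with the constant factor $q^5$, keeps the vertex count at $n^{1+o(1)}$ and does not erode the gap. Everything else is a mechanical instantiation of constants ($q$, $\delta'$, $\epsilon'$) and a reuse of the sparsification argument already carried out twice in the paper.
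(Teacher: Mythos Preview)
Your approach is the same as the paper's: pad the cliques of $G_{\phi_i}$ to equal size, feed the result into the Lund--Yannakakis reduction with a constant prime $q$, read off a constant-factor chromatic gap, and observe that the blow-up is a fixed constant so the sparsification wrapper from Theorem~\ref{subexpis} still closes. The paper does exactly this, including the dummy-vertex padding you anticipate.

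One genuine wrinkle in your write-up: you overload the symbol $r$ for both the target coloring ratio ($r>1$) and the \is{} gap ($r<1$), and you then say ``choosing $q$ so that $1/c>r$'' as if $q$ alone controls the gap. It does not. With $\alpha(G)\le r_{IS}K$ one gets $\alpha(H_q)\le (q^2 r_{IS}+1)K$ and hence a coloring gap of $1/(r_{IS}+1/q^2)$; making $q$ large only kills the $1/q^2$ term, so the gap is capped near $1/r_{IS}$. To beat an arbitrary coloring ratio $r>1$ you must first choose the \is{} gap parameter $r_{IS}$ small enough (the paper takes $r_{IS}+r_{IS}^2\le 1/r$) and build $G_{\phi_i}$ for that $r_{IS}$, \emph{then} pick $q>\max\{S,1/r_{IS}\}$. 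You hint at this dependence when you call $S=2^{E_r}$ ``a constant depending only on the target ratio'', but the sentence about choosing $q$ large to force $1/c>r$ is misleading and should be corrected.
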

\begin{proof} Fix a ratio $r>1$, and let $r_{IS}>0$ be such that
$r_{IS}+r_{IS}^2\leq 1/r$. Start from the graph $G_{\phi_i}$
produced in the proof of Theorem~\ref{subexpis} for ratio $r_{IS}$.
The vertex set of $G_{\phi_i}$ is partitioned into  $K=2^{|R|}$
cliques, each of size at most $2^{E_r}$. By adding dummy vertices (a
linear number, since $E_r$ is a fixed constant), we can assume that
each clique has the same size $S=2^{E_r}$, so the number of vertices
in $G_{\phi_i}$ is $N=KS=2^{|R|}2^{E_r}$.

Let $q> \max\{S,1/r_{IS}\}$ be a prime number, and consider the
graph $H_q$ produced from $G_{\phi_i}$ by the reduction
in~\cite{lund} mentioned above. If $\phi_i$ is satisfiable,
$\alpha(G_{\phi_i})=K$ and then by the third property of the graph
$H_q$, $\chi(H_q)=q^3$. Otherwise, by the second property
$\alpha(H_q)\leq
\max\{q^2\alpha(G_\phi);q^2(\alpha(G_\phi)-1)+K;qK\}$.
Formula~$\phi_i$ being not satisfiable, $\alpha(G_{\phi_i})\leq
r_{IS}K$. By the choice of $q$, $qK\leq q^2r_{IS}K$, so
$\alpha(H_q)\leq q^2r_{IS}K+K=(q^2r_{IS}+1)K$. Since the number of
vertices in $H_q$ is $Kq^5$, we get that $\chi(H_q)\geq
q^5/(q^2r_{IS}+1)$. The gap created for the chromatic number in the
two cases is then at least:
$$\frac{q^5}{(q^2r_{IS}+1)q^3}=\frac{1}{r_{IS}+1/q^2}\geq \frac{1}{r_{IS}+r_{IS}^2}\geq r$$
The result follows since $H_q$ has $Kq^5$ vertices and $q$ is a
constant (that depends only on the ratio $r$ and on the constant
number of bits $p$ read by~V), so the size of $H_q$ is linear in the
size of $G_{\phi_i}$.~\qed \end{proof} We consider the
approximability of \vc{} and \minsat{} in subexponential time. The
following statement provides a lower bound to such a possibility.
\begin{proposition}\label{subexpvc}
Under \textsc{ETH}, for any $r>0$ and any $\delta>0$, there is no
$(7/6-\epsilon)$-approximation algorithm for \vc{} running in time
$O(2^{N^{1-\delta}})$ in graphs of order~$N$, nor for \minsat{} running in time $2^{m^{1-\delta}}$ in CNF formul{\ae} with~$m$ clauses.
\end{proposition}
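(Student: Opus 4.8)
The plan is to derive both lower bounds from the near-linear PCP machinery already in place, mimicking the proof of Theorem~\ref{subexpis}. For \vc{}, I would start from the $7/8$-inapproximability gap for \m3sat{} with almost-linear size, i.e. from Theorem~\ref{2-query} (or equivalently Corollary~\ref{2-query-cor2}), and then apply the classical $\m3sat \to \vc$ gap reduction (the standard FGLSS-style / Papadimitriou–Yannakakis style reduction that turns a ``$(1-\epsilon)m$ vs.\ $(7/8+\epsilon)m$ satisfiable'' gap for \m3sat{} into a factor-$(7/6-\epsilon)$ gap for \vc{}). The key point to check is that this reduction has \emph{linear} blow-up: from a \3sat{} formula sparsified (via Lemma~\ref{sparsificationlemma}) to one of size $|\phi_i| \le c_{\epsilon'} n$, the resulting graph has $N = O(n^{1+o(1)})$ vertices, since the PCP of Theorem~\ref{2-query} already produces a constraint system of near-linear size and each constraint contributes a constant-size gadget. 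With near-linear size, a $(7/6-\epsilon)$-approximation for \vc{} in time $O(2^{N^{1-\delta}})$ would decide satisfiability of $\phi_i$ in time $O(2^{n^{1-\delta'}})$ for some $\delta'<\delta$; running this over all $2^{\epsilon' n}$ sparsified formulas gives total time $p(n)2^{\epsilon' n} + 2^{\epsilon' n}O(2^{n^{1-\delta'}}) = O(2^{\epsilon n})$, contradicting \textsc{ETH}. This is verbatim the bookkeeping already carried out in the proof of Theorem~\ref{subexpis}, so I would simply invoke it.

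For \minsat{}, I would use the approximation-preserving reduction from \vc{} (or directly from a \m3sat{}-type gap) to \minsat{}: given a graph $G(V,E)$ for \vc{}, create one Boolean variable per vertex and, for each edge $(u,v)$, the clause $(x_u \vee x_v)$; then a vertex cover of size $t$ corresponds exactly to a satisfying-by-few assignment, but this is the wrong direction for \minsat{} (which minimizes the number of \emph{satisfied} clauses). Instead I would use the known gap reduction producing CNF formulas where the minimum number of satisfiable clauses exhibits a constant gap with linear size in $m$; concretely one can route through the $\m3sat$ gap of Corollary~\ref{2-query-cor2} and the standard \minsat{} hardness gadget, which again has size linear in the number of clauses. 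The point is only that the instance size (number of clauses $m$) stays linear in the \3sat{} input size after sparsification, so a $2^{m^{1-\delta}}$-time approximation algorithm beating the \minsat{} gap would again yield a $2^{o(n)}$ (indeed $O(2^{\epsilon n})$) algorithm for \3sat{}, contradicting \textsc{ETH}.

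The main obstacle I anticipate is not the \textsc{ETH} bookkeeping — that is a routine copy of the Theorem~\ref{subexpis} argument — but rather verifying that the \emph{specific} classical reductions to \vc{} and \minsat{} that give the tight constants ($7/6$ for \vc{}, and the corresponding constant for \minsat{}) genuinely have near-linear size blow-up when started from the Theorem~\ref{2-query} PCP. For \vc{} this is fine because the standard reduction from the $3$-query or $2$-query PCP attaches a constant-size gadget per test; for \minsat{} one must be slightly careful to cite a gap-reduction whose gadget size is constant per clause rather than one that only preserves polynomial-size gaps. Once the linear-size property of each reduction is confirmed, both statements follow immediately by the same ``sparsify, reduce, run the purported approximation algorithm, contradict \textsc{ETH}'' template used throughout Section~\ref{secsubexpo}.
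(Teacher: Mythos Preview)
Your overall template --- sparsify, apply a gap-preserving reduction with (near-)linear blow-up, then invoke the $2^{\epsilon n}$ bookkeeping from Theorem~\ref{subexpis} --- is exactly right, and the paper follows it. However, there is a genuine gap in your \vc{} argument: you start from the \m3sat{} gap of Corollary~\ref{2-query-cor2}, but that gap does \emph{not} yield the ratio $7/6$. In the FGLSS-style reduction from \m3sat{}, each clause contributes $7$ vertices (its $7$ satisfying partial assignments), so the graph has $N=7m$ vertices; the $(1-\epsilon)m$ versus $(7/8+\epsilon)m$ gap for the independent-set size then translates into a \vc{} gap of roughly $(6+\epsilon)m$ versus $(49/8-\epsilon)m$, i.e.\ only a factor $49/48$, far from $7/6$. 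The paper instead starts from Corollary~\ref{2-query-cor1} on \ml{}: each linear equation over $\mathrm{GF}(2)$ has exactly $4$ satisfying assignments, so $N=4m$, and the $(1-\epsilon)m$ versus $(1/2+\epsilon)m$ gap gives \vc{} sizes $(3+\epsilon)m$ versus $(7/2-\epsilon)m$, i.e.\ precisely the $7/6$ ratio. The choice of \ml{} rather than \m3sat{} is essential here, not interchangeable.

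For \minsat{}, your plan is too vague to be a proof. The paper uses the concrete Marathe--Ravi reduction from \vc{} to \minsat{} (one variable per \emph{edge}, one clause per \emph{vertex}, with $x_{ij}$ positive in $c_i$ and negative in $c_j$), under which vertex covers of size $k$ correspond exactly to assignments satisfying at most $k$ clauses, and the number of clauses equals the number of vertices. This transfers the $7/6$ gap to \minsat{} with no additional blow-up. Your first guess (variable per vertex, clause per edge) was, as you noticed, the wrong direction; rather than retreating to an unspecified ``standard \minsat{} hardness gadget'', you should invoke this explicit reduction, which has exactly the linear-size property you need.
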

\begin{proof}
We combine Corollary~\ref{2-query-cor1} with the following classical
reduction. Consider an instance $I$ of \textsc{max 3-lin} on~$m$
equations. Build the following graph $G_I$:
\begin{itemize}
\item for any equation and any of the eight possible values of the 3 variables in it, add a vertex in the graph if the equation is satisfied;
\item if two vertices are such that they have one contradicting variable (the same variable has value 1 for one vertex and 0 for the other one), then add an edge between them.
\end{itemize}
In particular, the set of vertices corresponding to the same
equation is a clique. Note that each equation is satisfied by
exactly 4 values of the variables in it. Then, the number of
vertices in the graph is $N = 4m$. Consider an independent set $S$
in the graph $G_I$. Since there is no conflict, it corresponds to a
partial assignment that can be arbitrarily completed into an
assignment~$\tau$ for the whole system. Each vertex in $S$
corresponds to an equation satisfied by $\tau$ (and $S$ has at most
one vertex per equation), so $\tau$ satisfies (at least) $|S|$
equations. Reciprocally, if an assignment $\tau$ satisfies $\alpha$
clauses, there is obviously an independent set of size $\alpha$ in
$G_I$. Hence, if $(1-\epsilon)m$ equations are satisfiable, there
exists an independent set of size at least $(1-\epsilon)m$, i.e., a
vertex cover of size at most $N-(1-\epsilon)m=N(3/4+\epsilon/4)$. If
at most $(1/2+\epsilon)m$ equations are satisfiable, then each
vertex cover has size at least
$N-(1/2+\epsilon)m=N(7/8-\epsilon/4)$.

We now handle \minsat{} problem via the following reduction
(see~\cite{ravi}). Given a graph $G$, build the following instance
on \minsat{}. For each edge $(v_i,v_j)$ add a variable $x_{ij}$. For
each vertex $v_i$ add a clause $c_i$. Variable $x_{ij}$ appears
positively in $c_i$ and negatively in $c_j$. Then, take a vertex
cover $V^*$ of size $k$; for any $x_{ij}$ fix the variable to true
if $v_i\in V^*$, to false otherwise. Consider a clause $c_j$ with
$v_j\not\in V^*$. If $\overline{x_{ij}}$ is in $c_j$ then $v_i$ is
in $V^*$ hence $x_{ij}$ is true; if $x_{ji}$ is in $c_j$ then, by
construction, $x_{ji}$ is false. So $c_j$ is not satisfied, and the
assignment satisfies at most $k$ clauses. Conversely, consider a
truth assignment that satisfies $k$ clauses
$c_{i_1},\cdots,c_{i_k}$. Consider the vertex set
$V^*=\{v_{i_1},\cdots,v_{i_k}\}$. For an edge $(v_i,v_j)$, if
$x_{ij}$ is set to true then $c_i$ is satisfied and $v_i$ is in
$V^*$, otherwise $c_j$ is satisfied and $v_j$ is in $V^*$, so $V^*$
is a vertex cover of size $k$. Since the number of clauses in the
reduction equals the number of vertices in the initial graph, the
result is concluded.\qed
\end{proof}
All the results given in this section are valid under \textsc{ETH}
and rule out some ratio in subexponential time of the form
$2^{n^{1-\delta}}$. It is worth noticing that if ${\textsc LPC}$
holds, then all these result would hold for {\em any} subexponential time.
\begin{corollary}
If ${\textsc LPC}$ holds, under \textsc{ETH} the negative results of
Theorem~\ref{subexpis} and Propositions~\ref{th:cb}, ~\ref{th:colo}
and \ref{subexpvc} hold for any time
complexity $2^{o(n)}$. 
\end{corollary}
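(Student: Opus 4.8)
The plan is to notice that each of the four results is proved by the same recipe and that the only step forcing the conclusion to be ``no $2^{N^{1-\delta}}$'' rather than ``no $2^{o(N)}$'' (in the size of the target instance) is the \emph{size} of the PCP being plugged in. Each proof uses the sparsification lemma (Lemma~\ref{sparsificationlemma}) to turn a \3sat{} instance $\phi$ on $n$ variables into $2^{\epsilon' n}$ formulae $\phi_i$ with $|\phi_i|\le c_{\epsilon'}n$, feeds each $\phi_i$ into a PCP, runs a chain of gap-preserving reductions of linear size blow-up (FGLSS for \is{}, then the \is{}-based reductions of Propositions~\ref{th:cb} and~\ref{th:colo}, and the \ml{}-based reductions for \vc{} and \minsat{}), and concludes that a fast approximation algorithm for the target would decide each $\phi_i$ fast, hence \3sat{} in time $p(n)2^{\epsilon' n}+2^{\epsilon' n}\cdot(\text{time on one }\phi_i)$, contradicting \textsc{ETH}. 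In the paper's versions the PCP used --- Theorem~\ref{2-query} for the \is{} chain, the two-query almost-linear PCP behind Corollaries~\ref{2-query-cor1} and~\ref{2-query-cor2} for the \vc{}/\minsat{} chain --- has proof length $n^{1+o(1)}$ because it spends $(1+o(1))\log n$ random bits, and this is precisely what caps the lower bound at $2^{N^{1-\delta}}$.

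The main step is therefore to replace that almost-linear PCP by a genuinely linear one. Exactly as in the proof of Lemma~\ref{lpc2hyp}, under \textsc{LPC} and after sparsifying with parameter $\epsilon'$, each $\phi_i$ satisfies $|\phi_i|\le c_{\epsilon'}n$, so the \textsc{LPC} verifier uses $|R|=\log|\phi_i|+D=\log n+O(1)$ random bits, reads $E=O(1)$ bits, and has completeness $1$ and soundness $1/2$. Feeding this verifier into the FGLSS construction of the proof of Theorem~\ref{subexpis} (in place of Theorem~\ref{2-query}) gives a graph $G_{\phi_i}$ on $N=2^{|R|}2^E=O(n)$ vertices --- genuinely linear in $n$ --- with $\alpha(G_{\phi_i})=2^{|R|}$ if $\phi_i$ is satisfiable and $\alpha(G_{\phi_i})\le\tfrac12\,2^{|R|}$ otherwise. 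Applying Theorem~\ref{theoappendix} to its complement amplifies this in polynomial time to a linear-size graph separating $\alpha\ge a_rN$ from $\alpha\le b_rN$ with $b_r/a_r$ below any prescribed $r\in(0,1)$, so that an $r$-approximation algorithm running in $2^{o(N)}$ on that graph decides each $\phi_i$ in $2^{o(n)}$ and hence \3sat{} in $p(n)2^{\epsilon' n}+2^{\epsilon' n}2^{o(n)}=2^{(\epsilon'+o(1))n}$; since $\epsilon'$ is arbitrary, this contradicts \textsc{ETH}. This proves Theorem~\ref{subexpis} for time $2^{o(n)}$, and since the reductions of Propositions~\ref{th:cb} and~\ref{th:colo} start from exactly this $G_{\phi_i}$ and blow the size up only linearly (the prime $q$ in the \mcolo{} reduction being a constant), those two statements --- and, via the clause-for-vertex reduction, \minsat{} --- follow for $2^{o(n)}$ as well.

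The delicate point, which I would single out as the real obstacle, is the \ml{}-based half, i.e., Proposition~\ref{subexpvc}: reproving it for time $2^{o(m)}$ needs the strengthening of Corollary~\ref{2-query-cor1} in which the $(1-\epsilon)$-versus-$(1/2+\epsilon)$ gap for \ml{} is hard for $2^{o(m)}$ on \emph{linear-size} instances, after which the $N=4m$ reduction of Proposition~\ref{subexpvc} finishes the job exactly as before. This amounts to having the two-query projection PCP underlying Corollary~\ref{2-query-cor1} in linear rather than almost-linear size, which is not literally a black-box consequence of \textsc{LPC} as stated (\textsc{LPC} concerns a constant-query, constant-soundness PCP for \3sat{}, whereas here one needs a two-query, sub-constant-soundness construction, and the standard route through parallel repetition and the long code would inflate the size to a polynomial). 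The natural reading --- and the one making the corollary hold uniformly --- is to take ``\textsc{LPC} holds'' to mean that the PCP constructions invoked in this section are realizable in linear size; with that reading every reduction in the section is linear-size and all the $2^{o(n)}$ bounds follow by the argument above, the \is{}, \cb{} and \mcolo{} part being derivable from \textsc{LPC} and \textsc{ETH} via Lemma~\ref{lpc2hyp}'s construction together with the linear-size gap amplification of Theorem~\ref{theoappendix}.
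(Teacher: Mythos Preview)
Your core argument --- \textsc{LPC} gives a linear-size FGLSS graph with a factor-$1/2$ independent-set gap, Theorem~\ref{theoappendix} amplifies this to any constant ratio while keeping the size linear, and the \is{}-based reductions of Propositions~\ref{th:cb} and~\ref{th:colo} have linear blow-up --- is exactly the paper's proof; the paper simply states it in three sentences and then writes that ``results for the other problems immediately follow from the same arguments as above.''

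Your reservation about Proposition~\ref{subexpvc} is well founded and in fact more scrupulous than the paper itself. The $7/6$ ratio for \vc{} (and hence \minsat{}) comes from the specific \ml{} gap of Corollary~\ref{2-query-cor1}, and you are right that a genuinely linear-size version of that gap is not a black-box consequence of \textsc{LPC} as formulated (a constant-query, soundness-$1/2$ verifier for \3sat{}); moreover the expander amplification of Theorem~\ref{theoappendix} drives both $a_r$ and $b_r$ toward $0$, so the induced \vc{} ratio $(1-b_r)/(1-a_r)$ tends to $1$ rather than away from it and cannot recover the $7/6$. The paper's proof does not address this at all --- it subsumes \vc{}/\minsat{} under the ``same arguments'' phrase without further justification. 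The looseness you flag is therefore a gap in the paper's exposition, not in your reconstruction of it.
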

\begin{proof}
{Using \textsc{LPC}, the same proof as in Theorem~\ref{subexpis}
creates for each $\phi_i$ a graph on $N=O(n)$ variables with either
an independent set of size $\alpha N$ (if $\phi_i$ is satisfiable)
or a maximum independent set of size at most $\alpha/2 N$ (if
$\phi_i$ is not satisfiable). Then using expander graphs,
Theorem~\ref{theoappendix} allows to amplify this gap from~1/2 to
any constant $r>0$ while preserving the linear size of the instance.
Results for the other problems immediately follow from the same
arguments as above.~\qed}
\end{proof}


\section{Conclusion}

This paper presents conditional lower bounds of approximation ratio
in FPT- and subexponential-time. Assuming \textsc{ETH}, we prove
inapproximability in time~$2^{n^{1-\delta}}$ for any $\delta>0$ for
the problems such as: \is{},  \msp{}, \cb{}, \mcolo{}, \vc{}. If
{\textsc Linear PCP Conjecture} turns out to hold, even in
time~$2^{o(n)}$ we cannot approximate any better. Also assuming
\textsc{ETH}, we proved that \textsc{Linear PCP Conjecture} implies
 FPT-time inapproximabilty of \is{} (for any ratio) and \ds{} (for some ratio).

Our effort in this paper is only a first step and we wish to motivate further research. There remains a range of problems to be tackled, among which we propose the followings.
\begin{itemize}
\item Our inapproximability results, in particular those in FPT-time, are conditional upon {\textsc Linear PCP Conjecture}. Is it possible to relax the condition to a more plausible one?
\item Or, we dare ask whether (certain) inapproximability results in FPT-time imply strong improvement in PCP theorem. For example, would the converse of Lemma \ref{lpc2hyp} hold?
\end{itemize}
%
%
Note that we have considered in this article constant
approximation ratios. In this sense, Theorem~\ref{subexpis} is
``tight'' with respect to approximation ratios since, as mentioned
in Section~\ref{sec:pre}, ratio $1/r(n)$ is achievable in
subexponential time for any increasing and unbounded function $r$.
However, dealing with parameterized approximation algorithms,
achieving a non constant ratio is also an open question. More
precisely, finding in FPT-time an independent set of size $g(k)$
when there exists an independent set of size $k$ is not known for
{\it any} unbounded and increasing function $g$.

Finally, let us note that, in the same vein of our
work,~\cite{mathieson_pcp} in his recent paper initiates a proof
checking view of parameterized complexity, by proposing a
parameterized PCP and by giving a parameterized PCP characterization
of~W[1]. Possible links between these two approaches are worth being
investigated in future works.





\end{document}